\xdef\csname bf\x \endcsname{\noexpand\ensuremath{\noexpand\mathbf{\x}}}
\xdef\csname bm\x \endcsname{\noexpand\ensuremath{\noexpand\boldsymbol{\x}}}
\xdef\csname bs\x \endcsname{\noexpand\ensuremath{\noexpand\boldsymbol{\x}}}
\xdef\csname bf\x \endcsname{\noexpand\ensuremath{\noexpand\mathbf{\x}}}
\xdef\csname bb\x \endcsname{\noexpand\ensuremath{\noexpand\mathbb{\x}}}
\xdef\csname ds\x \endcsname{\noexpand\ensuremath{\noexpand\mathds{\x}}}
\xdef\csname cal\x \endcsname{\noexpand\ensuremath{\noexpand\mathcal{\x}}}
\DeclareMathOperator{\trace}{tr}
\def\transp{\top}
\def\bmmu{\bm{\mu}}
\renewcommand{\phi}{\varphi}
\def\bbHS{\bbH_{\calS}}
\def\real{\mathrm{Re}\,}
\def\imag{\mathrm{Im}\,}
\def\imagi{\mathrm{Im}_{\bmi}\,}
\def\imagj{\mathrm{Im}_{\bmj}\,}
\def\imagk{\mathrm{Im}_{\bmk}\,}
\newcommand{\ip}[1]{\left\langle #1 \right\rangle}
\renewcommand{\epsilon}{\varepsilon}
\newcommand{\matrixspace}[3]{{#1}^{#2 \times #3}}
\renewcommand{\tilde}{\widetilde}
\def\ie{\textit{i.e.}\xspace}
\def\argmin{\operatornamewithlimits{\arg\min}}
\def\bbHS{\bbH_{\calS}}
\def\intnmf{\calI^{\mathrm{NMF}}}
\def\intqnmf{\calI^{\mathrm{QNMF}}}
\def\intpol{\calI^{\mathrm{pol}}}
\newtheorem{proposition}{Proposition}
\newtheorem{definition}{Definition}
\begin{document}

\title{Quaternion Non-negative Matrix Factorization: \\ definition, uniqueness and algorithm}

\author{Julien Flamant, Sebastian Miron, David Brie
\thanks{All authors are with Université de Lorraine, CNRS, CRAN, F-54000 Nancy, France. This work has been funded by Région Grand-Est and project \textsc{ANR-15-CE10-0007 optifin}. Part of the material in this paper was presented at the 8th International Workshop on Computational Advances in Multi-Sensor Adaptive Processing (CAMSAP), December 15–18, 2019 \cite{flamant2019qnmf}.}}


\maketitle


\begin{abstract}
This article introduces quaternion non-negative matrix factorization (QNMF), which generalizes the usual non-negative matrix factorization (NMF) to the case of polarized signals.
Polarization information is represented by Stokes parameters, a set of 4 energetic parameters widely used in polarimetric imaging.
QNMF relies on two key ingredients: \emph{(i)} the algebraic representation of Stokes parameters thanks to quaternions and \emph{(ii)}
the exploitation of physical constraints on Stokes parameters.
These constraints generalize non-negativity to the case of polarized signals, encoding positive semi-definiteness of the covariance matrix associated which each source.
Uniqueness conditions for QNMF are presented.
Remarkably, they encompass known sufficient uniqueness conditions from NMF.
Meanwhile, QNMF further relaxes NMF uniqueness conditions requiring sources to exhibit a certain zero-pattern, by leveraging the complete polarization information.
We introduce a simple yet efficient algorithm called quaternion alternating least squares (QALS) to solve the QNMF problem in practice.
Closed-form quaternion updates are obtained using the recently introduced generalized HR calculus.
Numerical experiments on synthetic data demonstrate the relevance of the approach.
QNMF defines a promising generic low-rank approximation tool to handle polarization, notably for blind source separation problems arising in imaging applications.
\end{abstract}

\begin{IEEEkeywords}
  polarization, quaternion non-negative matrix factorization, Stokes parameters, blind source separation.
\end{IEEEkeywords}

\section{Introduction}

\IEEEPARstart{P}{olarization} information is essential to many fields ranging from seismology \cite{akiRichards2002}, optics \cite{born2000principles} to gravitational astronomy \cite{misner1973gravitation}, among others.
Polarization encodes the geometry of wave oscillations.
It carries crucial morphological and physical insights about the medium which transmitted or reflected the polarized wave.
For instance, polarization permits the detection of hidden targets in remote sensing \cite{lee2009polarimetric,tyo2006review}, it provides insights into solar magnetic fields \cite{orozco2007quiet} and reveals the anisotropy of biological tissues \cite{ghosh2011tissue}.
In summary, \emph{polarimetric imaging} proves to be useful in a wide range of situations as it provides unique features of the observed scene, unaccessible to conventional intensity imaging.

Recent years have seen an increased interest in exploiting polarization diversity in hyperspectral imaging systems \cite{mu2012static,zhao2009spectropolarimetric, miron2011joint}.
This modality, called \emph{spectro-polarimetric imaging}, was first popularized in astronomy \cite{antonucci1984optical}.
In full generality, it consists in acquiring, for a collection of wavelengths and locations, a 4-dimensional real vector gathering the 4 Stokes parameters.
These energetic parameters are widely used in optics \cite{J.J.Gil2016} to describe the complete properties of light (intensity and polarization).
The joint acquisition of spatial, spectral and polarization diversities raises important and original challenges, in particular regarding blind polarized source separation problems.

Spectro-polarimetric data can be represented as a third order real-valued tensor, where the rows, columns and fibers correspond to frequency, space and polarization diversities, respectively.
Note that, since polarization is described here by Stokes parameters, the fiber is of size 4.
Solving the blind polarized source separation problem using usual low-rank approximation techniques raises two major issues.
First, Stokes parameters being energetic quantities, they obey structural constraints \cite{J.J.Gil2016} which need to be taken into account to guarantee the physical interpretation of the recovered rank-1 terms.
Second, low-rank decompositions of spectro-polarimetric images such as the Canonical Polyadic Decomposition (CPD) \cite{kolda2009tensor} necessarily yield rank-1 terms with \emph{constant polarization}.
It thus restricts the physical validity of usual tensor decompositions to specific settings, \emph{e.g.} narrow band sources with constant polarization.
The general case of wideband sources with frequency-dependent polarization, as observed in spectro-polarimetric data, requires the development of new low-rank approximation tools to deal with the specificities of polarization.

To this aim, this paper introduces quaternion non-negative matrix factorization (QNMF).
It extends the concept of non-negative matrix factorization (NMF) to polarized sources.
It relies on two key ingredients: \emph{(i)} the algebraic representation of polarization information using quaternions and \emph{(ii)}
the exploitation of physical constraints on Stokes parameters, which generalize non-negativity to the case of polarized signals.
Non-negativity is to be understood here in a \emph{vectorial} sense: it encodes the non-negativity, or positive semi-definiteness, of the 2-by-2 complex covariance matrix associated with each polarized source, as will be explained later.

This \emph{generalized} non-negativity constraint on the source factor makes QNMF fundamentally different from NMF.
In fact, the three Stokes parameters describing polarization are not required to be real non-negative: instead, Stokes parameters must lie inside a particular 4D second-order convex cone.
This means that QNMF cannot be simply reformulated as an augmented real NMF problem by stacking Stokes parameters in a taller matrix.
Instead, QNMF preserves the intrinsic structure of Stokes parameters by leveraging their quaternion representation: it allows the associated generalized non-negativity constraints to be handled straightforwardly.

QNMF encodes a quaternion blind source separation problem.
However, unlike quaternion ICA \cite{javidi2011fast,via2010quaternion} which relies on a standard quaternion linear mixing model and assumes statistical independence of the sources, QNMF does not require any statistical model; it defines a quaternion linear mixing model with non-negativity constraints only.
Such non-negativity constraints are essential to physical interpretation: they guarantee the general relevance of QNMF to solve blind unmixing of spectro-polarimetric data, just like NMF offers a standard approach towards hyperspectral image unmixing \cite{ma2014signal}.
More generally, recent works \cite{flamant2018LTI,flamant2018phdthesis,flamant2017time,flamant2017spectral}
have shown that Stokes parameters provide a natural reparameterization for the second-order properties of bivariate signals.
As a result, QNMF defines a new low-rank approximation tool for arbitrary bivariate signals, such as current velocities in oceanography \cite{gonella1972rotary} or polarized gravitational waves \cite{flamant18eusipco}.

Our contributions are threefold.
First, we introduce the QNMF problem from a spectro-polarimetric linear mixing model perspective and demonstrate how it generalizes the usual NMF.
Second, we study uniqueness conditions for the QNMF problem and reveal the key disambiguating role played by polarization.
QNMF is shown to encompass known NMF sufficient uniqueness conditions, while relaxing NMF necessary uniqueness conditions.
Finally, we provide a simple yet efficient algorithm in the quaternion domain that effectively solves the QNMF problem.
We believe that quaternions, beyond offering an elegant way to handle polarization diversity, permit numerous theoretical (\emph{e.g.} uniqueness conditions) and methodological (\emph{e.g.} algorithmic design) developments that would have been cumbersome to obtain otherwise.

This paper is organized as follows.
Section \ref{sec:preliminaries} collects necessary material regarding quaternions, polarization and associated constraints.
Section \ref{sec:qnmf} defines the concept of quaternion non-negative matrix factorization (QNMF).
Section \ref{sec:identifiabilityNMF} studies uniqueness conditions for QNMF, starting with the simpler 2 sources case, followed by a discussion of the general case.
Section \ref{sec:algorithm} introduces the quaternion alternating least squares (QALS) algorithm for solving the QNMF problem in practice.
Section \ref{sec:numericalExperiments} provides numerical evidence to the relevance of the approach, while Section \ref{sec:conclusion} presents concluding remarks.
Appendices gather technical details and proofs.

\section{Preliminaries}
\label{sec:preliminaries}
This paper relies on the algebraic representation of polarization using quaternions together with the exploitation of structural non-negativity constraints.
To this end,  Section \ref{sub:quaternions} briefly reviews quaternion algebra and Section \ref{sub:polarizationStokes} describes polarization using Stokes parameters.
Section \ref{sub:quaternionPSD} then introduces the quaternion representation of Stokes parameters, whereas Section \ref{sub:polarizationConstraint} details structural non-negative like constraints on Stokes parameters.

\subsection{Quaternions}
\label{sub:quaternions}
Quaternions $\bbH$ forms a 4-dimensional normed division algebra over the real numbers $\bbR$
with canonical basis $\lbrace 1, \bmi, \bmj, \bmk\rbrace$ where $\bmi, \bmj, \bmk$ are imaginary units $\bmi^2 = \bmj^2 = \bmk^2 = -1$ such that $\bmi\bmj = \bmk, \: \bmi\bmj = - \bmj\bmi$ and $\bmi\bmj\bmk = -1$.
Importantly, quaternion multiplication is noncommutative, \ie for $q, q' \in \bbH$, $qq'\neq q'q$ in general.
Any quaternion $q \in \bbH$ reads in Cartesian form $q = a + b\bmi + c\bmj + d\bmk$ where $a, b, c, d \in \bbR$.
The real or scalar part of $q$ is $\real q = a$ and its imaginary or vector part is $\imag q = b\bmi + c\bmj + d\bmk$.
We note $\imagi q = b$, $\imagj q = c$ and $\imagk q = d$.
When $\real q = 0$, $q$ is said to be \emph{pure imaginary}, or simply \emph{pure}.
The quaternion conjugate of $q$ is denoted by $\overline{q} = \real q - \imag q$.
The modulus of $q$ is $\vert q \vert = \sqrt{q\overline{q}} = \sqrt{\overline{q}q}$.

A quaternion matrix $\bfQ \in \bbH^{M\times N}$ has elements $\bfQ_{mn} = q_{mn} \in \bbH$.
Notation $\overline{\bfQ}$ denotes the entry-wise conjugate of $\bfQ$ whereas its conjugate-transpose is denoted by $\bfQ^\dagger = \overline{\bfQ}^\top$.
Usual matrix operations require special care due to non-commutativity of quaternion multiplication: \emph{e.g.} given two matrices $\bfQ_1 \in \bbH^{M\times N}$ and $\bfQ_2 \in \bbH^{N\times P}$, in general one has $\overline{\bfQ_1\bfQ_2} \neq \overline{\bfQ}_1\:\overline{\bfQ}_2 $ and $ (\bfQ_1\bfQ_2)^\top \neq \bfQ_2^\top \bfQ_1^\top$.
However, $(\bfQ_1\bfQ_2)^\dagger = \bfQ_2^\dagger \bfQ_1^\dagger$ is always true.
More details on quaternions and their properties can be found in \cite{conway2003quaternions}. See also \cite{brenner1951matrices,lee1948eigenvalues,zhang1997quaternions,lebihan2004singular,rodman2014topics} and references therein for more on quaternion matrix algebra.

\subsection{Polarization and Stokes parameters}
\label{sub:polarizationStokes}
For electromagnetic (EM) waves such as light traveling in free space, the instantaneous electric field vector $\bfE(t)$ lives in the 2D plane orthogonal to the direction of propagation.
In such transverse waves, the end tip of $\bfE(t) \in \bbR^2$ describes a two-dimensional elliptical trajectory whose properties can vary with time, space or frequency.
This phenomenon is called \emph{polarization}: it describes the geometric nature of wave oscillations.
This behaviour is not limited to electromagnetic radiation: transverse elastic waves and gravitational waves also exhibit polarization.

In optics \cite{born2000principles,brosseau1998fundamentals,J.J.Gil2016}, the polarization state of light is commonly described by four Stokes parameters $S_0, S_1, S_2, S_3$.
These real-valued parameters are energetic quantities that can be measured in experiments \cite{schaefer2007measuring,berry1977measurement}.
The first Stokes parameter $S_0 \geq 0$ is classical and measures the \emph{total} intensity of light, \ie the sum of intensities from the \emph{polarized} and \emph{unpolarized} parts of light.
The three remaining Stokes parameters $S_1, S_2, S_3$ describe the polarized part of light, \ie the properties of the polarization ellipse encoded by the trajectory of the electric field vector.
The \emph{degree of polarization} $\Phi \in [0, 1]$ rules the relative contribution of the polarized part to the total intensity:
\begin{equation}\label{eq:defDegreeOfPolar}
  \Phi = \frac{\text{intensity of polarized part}}{\text{total intensity}} = \frac{\sqrt{S_1^2 + S_2^2 + S_3^2}}{S_0}\:.
\end{equation}
When $\Phi = 1$, light is said to be \emph{fully polarized}, whereas for $\Phi=0$ it is said \emph{unpolarized}. For $\Phi \in (0, 1)$, light is said to be \emph{partially polarized}.

\subsection{Quaternion representation of Stokes parameters}
\label{sub:quaternionPSD}

The Stokes vector $(S_0, S_1, S_2, S_3)^\top \in \bbR^4$ can be conveniently expressed using quaternion algebra as
\begin{equation}\label{eq:quaternionPSDStokes}
w = S_0 + \bmi S_3+ \bmj S_1 + \bmk S_2\: \in \bbH\:.
\end{equation}
This algebraic representation of Stokes parameters has been long standing in the field of optics \cite{Tudor2010a,Tudor2010b,Whitney1971,Pellat-finet1984,Richartz1949}.
More recently \cite{flamant2017spectral,flamant2018LTI}, it has been shown that \eqref{eq:quaternionPSDStokes} exhibits a general relevance for arbitrary bivariate signals.
Our choice for the ordering of Stokes parameters in \eqref{eq:quaternionPSDStokes} adopts that of \cite{flamant2017spectral}.

The quaternion representation of Stokes parameters enables straightforward physical and geometric interpretations while simplifying computations \cite{flamant2018LTI}.
Eq. \eqref{eq:quaternionPSDStokes} can be rewritten for convenience as
\begin{equation}\label{eq:quaternionPSDrwting}
w = I + I\Phi\bmmu
\end{equation}
where $I := S_0 \geq 0$ is the total intensity and $\bmmu$ is the \emph{polarization axis} -- a pure unit quaternion such that $\bmmu^2=-1$.
With this notation, the quantity $\Phi\bmmu$ is a pure quaternion that can be identified with a vector of $\bbR^3$ inside the Poincar\'e sphere \cite{born2000principles} which completely encodes the polarization state of light.
The expression \eqref{eq:quaternionPSDrwting} also emphasizes one of the benefits of the quaternion formalism: it allows a straightforward separation between pure energetic information, conveyed by $\real w = I$, and the geometric vector information encoded by $\imag w = I\Phi\bmmu$.

\subsection{Polarization constraint and quaternion non-negativity}
\label{sub:polarizationConstraint}
Stokes parameters are energetic quantities which are subject to admissibility conditions.
A vector $(S_0, S_1, S_2, S_3)^\top \in \bbR^4$ is admissible as a Stokes vector if it satisfies the following constraints $(\calS)$
\begin{equation}\tag{\calS}\label{eq:polarizationConstraint}
  S_0 \geq 0\quad \text{ and }\quad S_1^2 + S_2^2 + S_3^2 \leq S_0^2\:.
\end{equation}
Those two constraints bear a strong physical interpretation.
The first one is classical and indicates that the total intensity $S_0$ is a non-negative real quantity.
The second one means that the intensity of the polarized part cannot be larger than the total intensity.
As a result, these constraints equivalently encode the range $0 \leq \Phi \leq 1$ of the degree of polarization \eqref{eq:defDegreeOfPolar}.
Note that \eqref{eq:polarizationConstraint} does not impose at all that $S_1, S_2$ and $S_3$ are non-negative.

From a mathematical perspective, \eqref{eq:polarizationConstraint} defines a closed second-order convex cone in $\bbR^4$, denoted by $\bbR^4_{\calS}$.
Conditions \eqref{eq:polarizationConstraint} are referred to as \emph{non-negativity} constraints for polarized signals by the following argument.
Given $(S_0, S_1, S_2, S_3)^\transp \in \bbR^4$, consider the 2-by-2 Hermitian matrix\footnote{In optics, the matrix $\bfJ$ is known as the \emph{polarization} or \emph{coherency} matrix \cite[Section 1.4]{J.J.Gil2016} associated with the bivariate electric field $\bfE$. It can be statistically defined as a covariance matrix \cite[Section 1.4.1]{J.J.Gil2016}, from which Hermitian positive semi-definiteness follows.}:
\begin{equation}
  \bfJ = \frac{1}{2}
  \begin{bmatrix}
    S_0 + S_1 & S_2 + \bmi S_3\\
    S_2 - \bmi S_3 &   S_0 - S_1
  \end{bmatrix}\: \in \bbC^{2\times 2}.
\end{equation}
This mapping is bijective.
Then, a quick check shows that imposing positive semi-definiteness of $\bfJ$ -- or simply, \emph{non-negativity} of $\bfJ$, is equivalent to \eqref{eq:polarizationConstraint}, \ie
\begin{equation}
  (\calS) \Leftrightarrow \begin{cases}
    \trace \bfJ &\geq 0\\
    \det \bfJ & \geq 0
\end{cases}\:.
\end{equation}
This shows that $\bbR^4_{\calS}$ and the set of 2-by-2 Hermitian non-negative matrices are isomorphic.
This result will prove to be essential later on for imposing $\eqref{eq:polarizationConstraint}$ in practice, see Section \ref{sub:projectionConstraints}.
Since the proposed approach relies on the quaternion representation \eqref{eq:quaternionPSDStokes} of Stokes parameters, we define by extension the set of \emph{non-negative quaternions} $\bbHS \subset \bbH$ such that
\begin{equation}
  \bbHS \triangleq  \left\lbrace q \in \bbH \middle \vert \real q \geq 0 \text{ and } \vert \imag q \vert^2 \leq (\real q)^2 \right\rbrace\:.
\end{equation}
By definition, the sets $\bbHS$ and $\bbR^4_{\calS}$ are isomorphic.
They share the same properties:
in particular, this means that $\bbHS$ is a closed second-order convex cone of $\bbH$.

\section{Quaternion Non-negative Matrix Factorization}
\label{sec:qnmf}
This section introduces Quaternion Non-negative Matrix Factorization (QNMF), a new tool which generalizes the notion of Non-negative Matrix Factorization (NMF) to the case of polarized signals.
It exploits two key features: \emph{(i)} the polarization constraint \eqref{eq:polarizationConstraint}, which extends the notion of non-negativity to polarized signals and \emph{(ii)} the algebraic representation \eqref{eq:quaternionPSDStokes} of Stokes parameters using quaternions.
QNMF establishes a new and generic tool for low-rank approximations of polarized signals, with many potential applications in source separation or data completion.

\subsection{Definition}

Without loss of generality, let us consider the setting of a typical spectro-polarimetric experiment.
In such experiment, the four Stokes parameters $S_i(\nu_m, u_n)$, $i=0, 1, 2, 3$ are acquired for a collection of sampled frequencies (or wavelengths) $(\nu_m)_{1\leq m \leq M}$ and spatial locations $(u_n)_{1\leq n \leq N}$.
Using \eqref{eq:quaternionPSDStokes}, the collected data can be written in quaternion form as
\begin{equation}
  \begin{split}
      x(\nu_m, u_n) &= S_0(\nu_m, u_n) + \bmi S_3(\nu_m, u_n) \\
      &+ \bmj S_1(\nu_m, u_n) + \bmk S_2(\nu_m, u_n) \in \bbHS
  \end{split}\label{eq:SpectropolarimetricData}
\end{equation}
for $1\leq m \leq M$, $1\leq n \leq N$.
Note that spectro-polarimetric imaging includes standard hyperspectral imaging as a special case, in which only intensity measurements are performed \ie $x(\nu_m, u_n) = S_0(\nu_m, u_n)$ .

A standard assumption in hyperspectral imaging is to consider that, at a given pixel $u_n$, data can be represented as a linearly weighted combination of $P$ elementary sources (or endmembers) \cite{bioucas2012hyperspectral,ma2014signal}.
While being simple, the linear mixing model appears reasonable for many real-world settings \cite{keshava2002spectral,dobigeon2016linear}.
Generalizing this model to the case of polarized sources, we decompose spectro-polarimetric data \eqref{eq:SpectropolarimetricData} as the linear superposition of $P$ elementary polarized sources $w_p(\cdot) \in \bbHS$:
\begin{equation}\label{eq:mixingModel}
  x(\nu_m, u_n) = \sum_{p = 1}^P w_p(\nu_m)h_p(u_n)\:.
\end{equation}
The quantity $h_p(\cdot) \geq 0$ denotes the spatial activation coefficients attached to the source $p$.
Considering that $h_p(\cdot)$ is a non-negative real quantity has two key implications. First, it means that $h_p(u_n)$ can be interpreted as the abundance of the source $p$ at location $u_n$.
Second, it ensures that, for any elementary source $w_p(\cdot) \in \bbHS$ the linear combination $\sum_{p = 1}^P w_p(\nu_m)h_p(u_n)$ is in $\bbHS$ since $\bbHS$ is a convex cone.

Rewriting \eqref{eq:mixingModel} into matrix form introduces \emph{Quaternion Non-Negative Matrix Factorization} (QNMF) like
\begin{equation}\label{eq:qnmf}
  \bfX = \bfW\bfH
\end{equation}
where $\bfX \in \bbHS^{M\times N}$ is the \emph{data matrix} with coefficients $(\bfX)_{mn} = x_{mn} = x(\nu_m, u_n)$.
The matrix $\bfW \in \bbHS^{M\times P}$ is the \emph{source matrix} with coefficients $(\bfW)_{mp} = w_{mp} = w_p(\nu_m)$, whereas $\bfH \in \bbR_+^{P\times N}$ is called the \emph{activation matrix} such that $(\bfH)_{pn} = h_{pn} = h_p(u_n)$.

\subsection{Relation with NMF}
\label{sub:relationNMF}
The QNMF problem  extends the well-known NMF problem to the case of polarized signals.
Compared to NMF, QNMF features a quaternion-valued sources factor $\bfW$ which exploits the constraint \eqref{eq:polarizationConstraint} instead of the usual non-negativity constraint.
Using quaternion algebra rules, the QNMF \eqref{eq:qnmf} can also be rewritten as
\begin{equation}
  \bfX = \bfW\bfH \Leftrightarrow \begin{cases}
    \real\bfX = \left[\real\bfW\right] \bfH & \text{(NMF)} \\
    \imag\bfX = \left[\imag\bfW\right] \bfH & \text{(polarization)}
\end{cases}\:.\label{eq:QNMFcofact}
\end{equation}
Eq. \eqref{eq:QNMFcofact} shows that QNMF can be seen as a co-factorization problem with common activation factor $\bfH$.
The first factorization problem is an usual NMF on the real part of $\bfX$, \ie on intensity data (Stokes parameter $S_0 \geq 0$) only.
The second one corresponds to a factorization problem on the imaginary part of $\bfX$ encoding polarization properties (Stokes parameters $S_1, S_2, S_3$).
These two factorization problems are not independent for two reasons: \emph{(i)} the activation factor $\bfH$ appears in both and \emph{(ii)} for each coefficient $(m,p)$ of the source factor $\bfW$, the constraint \eqref{eq:polarizationConstraint} connects the modulus of the imaginary part (polarization factorization problem) to its real part (NMF problem) by an inequality constraint.

The relationship \eqref{eq:QNMFcofact} provides another illustration of how QNMF extends NMF to account for polarization diversity.
It allows a precise quantification of the role played by the polarization information and its associated constraint \eqref{eq:polarizationConstraint}.
This will be particularly useful for comparing uniqueness conditions of QNMF and NMF, see Section \ref{sec:identifiabilityNMF}.

\section{Uniqueness conditions for QNMF}
\label{sec:identifiabilityNMF}
This section deals with a fundamental question: upon which conditions on the source $\bfW$ and activation $\bfH$ factors is the QNMF $\bfX = \bfW\bfH$ unique?
The \emph{uniqueness or identifiability} question is central in any factorization or decomposition problem, as it is intimately related to the interpretability of the underlying model.
Many authors have thus considered this question for the NMF problem, either by an analytical study of the properties of the NMF model  \cite{moussaoui2005non,donoho2004does,laurberg2008theorems,huang2014non} or by
 the design of suitable identification criteria to guarantee identifiability \cite{fu2018identifiability,lin2015identifiability,fu2015blind,fu2019nonnegative}.
In this paper, we address the QNMF uniqueness question by a theoretical study of the properties of the QNMF model.
To this aim, Section \ref{sub:trivialAmbNMFuniqueness} starts by introducing the intrinsic ambiguities of the QNMF model.
We provide sufficient uniqueness conditions for QNMF, directly inherited from known NMF ones.
To reveal the key role played by polarization information in model identifiability, Section \ref{sub:twoSourcesIdentifiability} studies in detail the $P=2$ sources case and provides a sufficient condition for QNMF uniqueness.
Section \ref{sub:generalCaseIdentifiability} considers the general $P \geq 2$ case and gives a necessary uniqueness condition that relaxes the requirement for sources $\bfW$ to have some zero entries.

\subsection{Trivial ambiguities and NMF-based uniqueness}
\label{sub:trivialAmbNMFuniqueness}
Let $\bfX \in \bbHS^{M\times N}$ and suppose that there exists two matrices $\bfW \in \bbHS^{M\times P}$ and $\bfH \in \bbR_+^{P\times N}$ such that $\bfX = \bfW\bfH$ holds.
Given a nonsingular matrix $\bfT \in \matrixspace{\bbH}{P}{P}$, any pair $(\tilde{\bfW}, \tilde{\bfH})$ defined as
\begin{align}
  \tilde{\bfW} &= \bfW\bfT \label{eq:indeterminacySourceT}\\
  \tilde{\bfH} &= \bfT^{-1}\bfH \label{eq:indeterminacyActivationT}
\end{align}
leaves the data matrix unchanged \ie $\bfX = \bfW\bfH = \tilde{\bfW}\tilde{\bfH}$.
However, to be admissible, the linear transformation $\bfT$ should yield matrices $\tilde{\bfW}$ and $\tilde{\bfH}$ such that
\begin{equation}\label{eq:constraintTilde}
  \tilde{\bfW} \in \bbHS^{M\times P} \text{ and } \tilde{\bfH} \in \bbR_+^{P\times N}\:.
\end{equation}
The real-valuedness constraint on $\tilde{\bfH}$ directly imposes that $\bfT$ is a real matrix.
Indeed, if the entries $\bfT$ are quaternion-valued, so are the entries of $\bfT^{-1}$, and thus $\tilde{\bfH} = \bfT^{-1}\bfH$ is also quaternion-valued.
As a result, only linear transformations $\bfT \in \bbR^{P\times P}$ may yield alternative factors  $\tilde{\bfW}$ and $\tilde{\bfH}$ satisfying \eqref{eq:constraintTilde}, just as in the standard NMF case.
As a result, QNMF exhibits the same trivial ambiguities as NMF, summarized by Proposition \ref{prop:trivialAmbiguitues} below.
\begin{proposition}\label{prop:trivialAmbiguitues}
The QNMF $\bfX = \bfW\bfH$ exhibits two intrinsic ambiguities, namely,
\begin{itemize}
  \item scale indeterminacy:
  \begin{equation}\label{eq:scaleIndeterminacy}
    \bfT = \mathrm{diag}\left(t_1, t_2, \ldots, t_P\right)\:, t_i > 0, 1 \leq i \leq P
  \end{equation}
  \item order indeterminacy:
  \begin{equation}\label{eq:orderIndeterminacy}
    \bfT \text{ is a permutation matrix}
  \end{equation}
\end{itemize}
  for which $\tilde{\bfW} = \bfW\bfT$ and $\tilde{\bfH} = \bfT^{-1}\bfH$ define equally valid QNMF factors.
\end{proposition}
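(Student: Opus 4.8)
The plan is to build directly on the two reductions already in hand. The preceding discussion shows that for any nonsingular $\bfT$ the algebraic identity $\bfX = \tilde{\bfW}\tilde{\bfH}$ is automatic, and that the requirement $\tilde{\bfH} \in \bbR_+^{P\times N}$ forces $\bfT$ to be real. It therefore remains only to verify that the two announced families---positive diagonal matrices \eqref{eq:scaleIndeterminacy} and permutation matrices \eqref{eq:orderIndeterminacy}---do satisfy the admissibility constraints \eqref{eq:constraintTilde}. I would check the two constraints in \eqref{eq:constraintTilde} separately for each family.

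First I would handle the activation factor $\tilde{\bfH} = \bfT^{-1}\bfH$. For a positive diagonal $\bfT = \mathrm{diag}(t_1, \ldots, t_P)$, the inverse $\bfT^{-1} = \mathrm{diag}(t_1^{-1}, \ldots, t_P^{-1})$ is again positive diagonal, so $\tilde{\bfH}$ merely rescales each row of $\bfH$ by a positive real and stays entrywise non-negative; for a permutation $\bfT$ one has $\bfT^{-1} = \bfT^\transp$, which permutes the rows of $\bfH$ and likewise preserves non-negativity. Next I would handle the source factor $\tilde{\bfW} = \bfW\bfT$. Because $\bfT$ is real, right-multiplication recombines the columns of $\bfW$ using real scalars only, so noncommutativity of $\bbH$ plays no role here. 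A positive diagonal $\bfT$ scales column $p$ by $t_p > 0$, and since $\bbHS$ is a convex cone it is closed under multiplication by positive reals---concretely $\real(tq) = t\,\real q \geq 0$ and $|\imag(tq)|^2 = t^2|\imag q|^2 \leq t^2(\real q)^2 = (\real(tq))^2$ for every entry $q \in \bbHS$---so each column of $\tilde{\bfW}$ remains in $\bbHS^M$. A permutation $\bfT$ simply reorders the columns of $\bfW$, each already lying in $\bbHS^M$, so $\tilde{\bfW} \in \bbHS^{M\times P}$ as well.

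This establishes both families---and hence their products, the positive monomial matrices---as admissible, so they are genuine intrinsic ambiguities of every QNMF. To explain why these are exactly the trivial ambiguities inherited from NMF, I would further observe that a real nonsingular $\bfT$ yields admissible factors for \emph{arbitrary} $\bfW$ and $\bfH$ precisely when $\bfT$ is entrywise non-negative (so that each column of $\bfW\bfT$ is a non-negative combination of elements of the convex cone $\bbHS$, hence in $\bbHS$) and $\bfT^{-1}$ is entrywise non-negative (so that $\bfT^{-1}\bfH$ stays in $\bbR_+^{P\times N}$ for every non-negative $\bfH$); the classical fact that the only real nonsingular matrices with both themselves and their inverse non-negative are the positive monomial matrices then closes the characterization. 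The verification is elementary throughout, so there is no real obstacle; the single point needing attention is the bookkeeping---right-multiplication by $\bfT$ acts on the columns of $\bfW$, whereas left-multiplication by $\bfT^{-1}$ acts on the rows of $\bfH$---together with the one structural input that $\bbHS$ is a convex cone, which is the only place polarization geometry enters.
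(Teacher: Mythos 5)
Your proof is correct and follows essentially the same route as the paper, which establishes the reduction to real $\bfT$ in the discussion preceding the proposition and then inherits the scale/permutation ambiguities from the standard NMF case. The only addition on your side is the explicit verification that $\bbHS$, being a convex cone, is closed under positive real scaling of the columns of $\bfW$ — a detail the paper leaves implicit but which is exactly the right structural input.
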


These inevitable indeterminacies motivate the following definition for the identifiability of the QNMF problem.
\begin{definition}
The QNMF $\bfX = \bfW\bfH$ is said to be essentially unique if $\bfX = \tilde{\bfW}\tilde{\bfH}$ implies $\tilde{\bfW} = \bfW\bfD \bfP$ and $\tilde{\bfH} = \left(\bfD \bfP\right)^{-1}\bfH$ where $\bfD$ is a diagonal matrix with strictly positive entries and $\bfP$ is a permutation matrix.
\label{def:uniquenessQNMF}
\end{definition}
QNMF features the same trivial ambiguities as NMF: those can be handled by standard techniques, \emph{e.g.} by imposing normalization on the source coefficients to get rid of the scale indeterminacy.

As explained in Section \ref{sub:relationNMF}, QNMF generalizes NMF to the case of polarized signals.
The co-factorization perspective given by \eqref{eq:QNMFcofact} shows that QNMF inherits sufficient uniqueness conditions from NMF.
\begin{proposition}[NMF-based sufficient condition for uniqueness]
If the NMF $\real \bfX = \left[\real\bfW\right] \bfH$ is essentially unique, then the QNMF $\bfX = \bfW\bfH$ is essentially unique.
  \label{prop:sufficientConditionGeneralCase}
\end{proposition}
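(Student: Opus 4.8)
The plan is to reduce the essential uniqueness of the QNMF to that of its real part, exploiting the co-factorization structure \eqref{eq:QNMFcofact} together with the observation established just above the statement: any admissible transformation relating two QNMF factorizations must be \emph{real}. Concretely, I would start from an arbitrary admissible alternative factorization $\bfX = \tilde{\bfW}\tilde{\bfH}$ with $\tilde{\bfW} \in \bbHS^{M\times P}$ and $\tilde{\bfH} \in \bbR_+^{P\times N}$. By the preceding argument, such a factorization arises through a nonsingular real matrix $\bfT \in \bbR^{P\times P}$, \ie $\tilde{\bfW} = \bfW\bfT$ and $\tilde{\bfH} = \bfT^{-1}\bfH$.

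Next I would take the real part of $\bfX = \tilde{\bfW}\tilde{\bfH}$. Since $\tilde{\bfH}$ is real-valued, one has $\real(\tilde{\bfW}\tilde{\bfH}) = [\real\tilde{\bfW}]\tilde{\bfH}$; moreover $\real\tilde{\bfW} \geq 0$ entrywise because $\tilde{\bfW} \in \bbHS^{M\times P}$ (by definition of $\bbHS$), and $\tilde{\bfH} \geq 0$ by assumption. Hence $(\real\tilde{\bfW}, \tilde{\bfH})$ is an admissible NMF of $\real\bfX$. Because $\bfT$ is real, $\real\tilde{\bfW} = [\real\bfW]\bfT$, so this NMF is exactly the image of the factorization $(\real\bfW, \bfH)$ under $\bfT$. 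Invoking the hypothesis that the NMF $\real\bfX = [\real\bfW]\bfH$ is essentially unique then forces $\bfT = \bfD\bfP$ for some positive diagonal $\bfD$ and permutation matrix $\bfP$.

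Finally, because the very same $\bfT$ governs the quaternion source factor through $\tilde{\bfW} = \bfW\bfT$, the conclusion $\bfT = \bfD\bfP$ propagates immediately to the full quaternion matrices, yielding $\tilde{\bfW} = \bfW\bfD\bfP$ and $\tilde{\bfH} = (\bfD\bfP)^{-1}\bfH$, which is precisely essential uniqueness in the sense of Definition \ref{def:uniquenessQNMF}. The step that I expect to carry the actual content is this last propagation: the real NMF directly constrains only $\real\tilde{\bfW}$ and $\tilde{\bfH}$ and says nothing about the polarization (imaginary) part of $\tilde{\bfW}$; it is the fact that the disambiguating transformation is forced to be real and \emph{shared} between both blocks of \eqref{eq:QNMFcofact} that allows uniqueness of the intensity factorization alone to determine the entire quaternion source factor. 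The one technical point to make rigorous is that essential uniqueness of the real NMF entails $\mathrm{rank}(\real\bfX) = P$, so that $\real\bfW$ has full column rank and $\bfT$ is the unique real matrix relating the two real factorizations; this is what legitimizes identifying $\bfT$ with $\bfD\bfP$.
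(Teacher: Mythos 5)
Your proposal is correct and follows essentially the same route as the paper: take real parts to obtain an NMF of $\real\bfX$, invoke its essential uniqueness to force the relating transformation to be $\bfD\bfP$, and propagate this conclusion to the quaternion factor $\bfW$ through the shared activation matrix. You are in fact slightly more careful than the paper's three-line proof, which silently uses the full row rank of $\bfH$ (equivalently, the full column rank of $\real\bfW$) to pass from $\tilde{\bfH} = (\bfD\bfP)^{-1}\bfH$ to $\tilde{\bfW} = \bfW\bfD\bfP$ --- precisely the technical point you flag explicitly at the end.
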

\begin{proof}
  Suppose that $\real \bfX = \left[\real\bfW\right] \bfH$ is essentially unique.
  Thus $\bfH$ can always be written as $\bfH = \bfD \bfP\tilde{\bfH}$  where $\bfD$ is a diagonal matrix with strictly positive entries and $\bfP$ is a permutation matrix.
  Then $\bfX = \bfW\bfH = \tilde{\bfW}\tilde{\bfH}$, with $\tilde{\bfW} = \bfW\bfD \bfP$ and $\tilde{\bfH} = \left(\bfD \bfP\right)^{-1}\bfH$ which shows that the QNMF is essentially unique by Definition \ref{def:uniquenessQNMF}.
\end{proof}
Proposition \ref{prop:sufficientConditionGeneralCase} shows that known sufficient conditions for NMF directly extend to the QNMF case.
For instance, if $\real\bfW$ and $\bfH$ satisfy the separability related assumptions of \cite{laurberg2008theorems,donoho2004does} or the more relaxed sufficiently scattered condition \cite{huang2014non}, then the QNMF $\bfX = \bfW\bfH$ is essentially unique.
While this result is reassuring -- QNMF encompasses known NMF uniqueness conditions -- it does not take advantage of polarization information contained in $\imag \bfW$.
The two remaining sections thus explore how this supplementary information together with the polarization constraint \eqref{eq:polarizationConstraint} impact QNMF model identifiability.

\subsection{Two sources case: range of admissible solutions and sufficient uniqueness condition}
\label{sub:twoSourcesIdentifiability}
We first consider the simpler case of two sources ($P = 2$), which is particularly instrumental in understanding the role played by the polarization constraint \eqref{eq:polarizationConstraint}.
The approach follows closely the one presented in \cite{moussaoui2005non} for the NMF case.

For $1\leq m \leq M$ and $1\leq n\leq N$, the corresponding entries of each of the factors $\bfW$ and $\bfH$ are given by
\begin{equation}
(\bfW)_{mp} = w_{mp} \in \bbHS \text{ and } (\bfH)_{pn} = h_{pn} \geq 0
\end{equation}
where $p=1,2$ denotes the source index.
We use the convenient quaternion form \eqref{eq:quaternionPSDrwting} for $w_{mp}$, \ie
\begin{equation}\label{eq:wmpRewriting}
  w_{mp} = I_{mp} + I_{mp}\Phi_{mp}\bmmu_{mp}\:.
\end{equation}
To model QNMF indeterminacies, we parameterize the matrix $\bfT$  in Eqs. \eqref{eq:indeterminacySourceT} -- \eqref{eq:indeterminacyActivationT} as $\bfT = \bfT(\alpha, \beta)$, $\alpha, \beta \in \bbR$ such that
\begin{equation}\label{eq:TmatrixExplicit}
  \bfT(\alpha, \beta) = \begin{bmatrix}
    1-\alpha & \beta \\
    \alpha & 1-\beta
\end{bmatrix}\:.
\end{equation}
This form for $\bfT(\alpha, \beta)$, together with the assumption $\alpha + \beta < 1$ ensures that $\bfT(\alpha, \beta)$ models only non-trivial  ambiguities of QNMF, \ie other than those given in Proposition \ref{prop:trivialAmbiguitues}.
It follows that $\bfT$ is invertible, with inverse
\begin{equation}\label{eq:TmatrixExplicitInv}
  \bfT^{-1}(\alpha, \beta) = \frac{1}{1-\alpha-\beta}\begin{bmatrix}
    1-\beta & -\beta \\
    -\alpha & 1-\alpha
\end{bmatrix}\:.
\end{equation}

Now consider the linear transformed factors $\tilde{\bfW} = \bfW\bfT(\alpha, \beta)$ and $ \tilde{\bfH} = \bfT^{-1}(\alpha, \beta)\bfH$.
Using \eqref{eq:TmatrixExplicit}, the mixing of sources explicitly reads for $1 \leq m \leq M$
\begin{align}
  \tilde{w}_{m1} & = (1-\alpha)w_{m1} + \alpha w_{m2}\label{eq:mixingSourcesAlpha}\\
  \tilde{w}_{m2} & = \beta w_{m1} + (1-\beta) w_{m2}\label{eq:mixingSourcesBeta}\:.
\end{align}
With \eqref{eq:TmatrixExplicitInv} one gets the new activations for $1 \leq n \leq N$
\begin{align}
  \tilde{h}_{1n} & = \frac{(1-\beta)h_{1n} - \beta h_{2n}}{1-\alpha-\beta} \label{eq:mixingActivAlpha}\\
  \tilde{h}_{2n} & = \frac{-\alpha h_{1n} + (1-\alpha) h_{2n}}{1-\alpha-\beta} \label{eq:mixingActivBeta}\:.
\end{align}
Non-negativity of $\tilde{\bfH}$ imposes that $\tilde{h}_{1n} \geq 0$ and $\tilde{h}_{2n} \geq 0$ for every $n$.
The polarization constraints \eqref{eq:polarizationConstraint} impose that
$\tilde{w}_{m1}, \tilde{w}_{m2} \in \bbHS$.
In other words, for every $m$, one has
\begin{align}
  \text{(NMF)} & \begin{cases}
    \real\tilde{w}_{m1} &\geq 0\\
    \real\tilde{w}_{m2} &\geq 0
  \end{cases} \:,\label{eq:wtildeNMF}\\
  \text{(polarization)} &
  \begin{cases}
      \vert \imag \:\tilde{w}_{m1} \vert^2 \leq \left(\real \:\tilde{w}_{m1}\right)^2\\
        \vert \imag \:\tilde{w}_{m2} \vert^2 \leq \left(\real \:\tilde{w}_{m2}\right)^2
  \end{cases}\label{eq:wtildePol}\:.
\end{align}
Eq. \eqref{eq:wtildeNMF} corresponds also to the usual NMF case \cite{moussaoui2005non}.
However, condition \eqref{eq:wtildePol} reveals the specifity of QNMF.

In order to find the range of admissible solutions for QNMF, we search for intervals $\intqnmf_\alpha$ and $\intqnmf_\beta$ such that, for every $(\alpha, \beta) \in \intqnmf_\alpha \times \intqnmf_\beta$ the linear transform $\bfT(\alpha, \beta)$ yields valid $\tilde{\bfW}$ and $\tilde{\bfH}$ factors.
Remark that, since Eqs. \eqref{eq:mixingActivAlpha} -- \eqref{eq:wtildeNMF} are identical to the NMF case, we can express the intervals $\intqnmf_\alpha$ and $\intqnmf_\beta$ as
\begin{align}
  \intqnmf_\alpha &= \intnmf_\alpha  \bigcap_{m}\intpol_{\alpha, m}\label{eq:intQMFrwting1}\\
  \intqnmf_\beta &= \intnmf_\beta  \bigcap_{m}\intpol_{\beta, m}\label{eq:intQMFrwting2}
\end{align}
where $\intnmf_\alpha$ (resp. $\intnmf_\beta$) is the interval obtained for $\alpha$ (resp. $\beta$) using non-negativity constraints only, as explained below.
For every $m$, $\intpol_{\alpha, m}$ and $\intpol_{\beta, m}$ indicate  intervals for $\alpha$ and $\beta$ such that the polarization constraint \eqref{eq:wtildePol} holds.
Thanks to this rewriting, Eqs. \eqref{eq:intQMFrwting1}--\eqref{eq:intQMFrwting2} permit to separate the contributions of the non-negativity and polarization constraints, respectively.
This allows to precisely quantify the role played by polarization constraint in QNMF to improve NMF identifiability.

\textbf{Computation of $\intnmf_\alpha$ and $\intnmf_\beta$.} These intervals encode 2 non-negativity conditions: the real part of the sources and the activation coefficients.
Their expression follow directly from NMF results \cite{moussaoui2005non}, that is:
\begin{equation}
  \intnmf_\alpha = (\alpha_{\text{min}}^{\text{NMF}}, \alpha_{\text{max}}^{\text{NMF}}] \text{ and } \intnmf_\beta = (\beta_{\text{min}}^{\text{NMF}}, \beta_{\text{max}}^{\text{NMF}}]\:.
\end{equation}
Lower bounds depend on the ratio of the real parts of the sources like
\begin{equation}
 \alpha_{\text{min}}^{\text{NMF}} = - \min_{m \in \bbM_1} \frac{I_{m1}}{I_{m2}-I_{m1}}, \:  \beta_{\text{min}}^{\text{NMF}} = - \min_{m \in \bbM_2} \frac{I_{m2}}{I_{m1}-I_{m2}}
\end{equation}
where for convenience $\real w_{mp} = I_{mp}$ as in \eqref{eq:wmpRewriting} and $\bbM_1 = \lbrace m \vert I_{m2} > I_{m1} \rbrace$,  $\bbM_2 = \lbrace m \vert I_{m1} > I_{m2} \rbrace$ are supposed to be non-empty sets.
If $\bbM_1 = \emptyset$ (resp. $\bbM_2 = \emptyset$), one has $\alpha_{\text{min}}^{\text{NMF}} = -\infty$ (resp. $\beta_{\text{min}}^{\text{NMF}} = -\infty$).
Activations coefficients control the upper bounds of $\intnmf_\alpha$ and $\intnmf_\beta$:
\begin{align}\label{eq:NMFupperBounds}
 \alpha_{\text{max}}^{\text{NMF}} =  \min_{n} \frac{h_{2n}}{h_{1n}+ h_{2n}}, \: \beta_{\text{max}}^{\text{NMF}} =  \min_{n} \frac{h_{1n}}{h_{1n}+ h_{2n}}\:.
\end{align}

\textbf{Computation of $\intpol_\alpha$ and $\intpol_\beta$.}
Let us now exploit the polarization condition \eqref{eq:wtildePol}, on which relies QNMF.
Fix $m$ $(1 \leq m \leq M)$ and start with parameter $\alpha$.
Plugging \eqref{eq:wmpRewriting} into \eqref{eq:mixingSourcesAlpha} ones gets:
\begin{align}
  \real\tilde{w}_{m1} &= (1-\alpha)I_{m1}  + \alpha I_{m2}\:,\\
  \imag\tilde{w}_{m1} &= (1-\alpha) I_{m1}\Phi_{m1}\bmmu_{m1} + \alpha I_{m2}\Phi_{m2}\bmmu_{m2}\:.
\end{align}
By developing all terms appearing in the polarization constraint \eqref{eq:wtildePol} and reorganizing the expression as a second-order polynomial in $\alpha$, we get the following inequality
\begin{align}
&\alpha^2\left[I_{m1}^2 (1 - \Phi_{m1}^2) + I_{m2}^2(1-\Phi_{m2}^2) \right.\nonumber\\
& \qquad\qquad \qquad  \left. - 2I_{m1}I_{m2}(1 - \Phi_{m1}\Phi_{m2} \ip{\bmmu_{m1}, \bmmu_{m2}}) \right]\nonumber\\
  +&2\alpha\left[I_{m1}I_{m2}(1 - \Phi_{m1}\Phi_{m2} \ip{\bmmu_{m1}, \bmmu_{m2}}) - I_{m1}^2 (1 - \Phi_{m1}^2)\right] \nonumber\\
  +& I_{m1}^2 (1 - \Phi_{m1}^2) \geq 0 \label{eq:polynomialAlpha}
\end{align}
where $\ip{\bmmu_1, \bmmu_2} = -\real \bmmu_1\bmmu_2$ can be identified with the usual inner product of $\bbR^3$.
Starting from \eqref{eq:mixingSourcesBeta}, the same approach is used for $\beta$, leading to
\begin{align}
 &\beta^2\left[I_{m2}^2 (1 - \Phi_{m2}^2) + I_{m1}^2(1-\Phi_{m1}^2) \right.\nonumber\\
 & \left.\qquad\qquad\qquad   - 2I_{m1}I_{m2}(1 - \Phi_{m1}\Phi_{m2} \ip{\bmmu_{m1}, \bmmu_{m2}}) \right]\nonumber\\
  +&2\beta\left[I_{m1}I_{m2}(1 - \Phi_{m1}\Phi_{m2} \ip{\bmmu_{m1}, \bmmu_{m2}}) - I_{m2}^2 (1 - \Phi_{m2}^2)\right]\nonumber\\
  +&I_{m2}^2 (1 - \Phi_{m2}^2) \geq 0 \:.\label{eq:polynomialBeta}
\end{align}
For a given $m$, intervals $\intpol_{\alpha,m}$ and $\intpol_{\beta, m}$ are obtained by solving the corresponding polynomial inequalities \eqref{eq:polynomialAlpha} and \eqref{eq:polynomialBeta}.
It involves finding the roots of the associated second-order polynomials in $\alpha$ and $\beta$, respectively.
An easy but lengthy computation of polynomial discriminants shows that \eqref{eq:polynomialAlpha} and \eqref{eq:polynomialBeta} always admit two real-valued solutions (possibly degenerate).
Unfortunately, the amount of parameters involved (at least 4: $\Phi_{m1}, \Phi_{m2}, \ip{\bmmu_{m1}, \bmmu_{m2}}$ and the ratio $I_{m1}/I_{m2}$) prevents from performing a general theoretical study of the roots behaviour.
Such a study would require to account for numerous particular cases, notably due to the fact that the sign of the second-order term is not constant and can even cancel out for some values of parameters.
Nonetheless, in practice when values of sources parameters are given, Eqs. \eqref{eq:polynomialAlpha} and \eqref{eq:polynomialBeta} can be solved very efficiently numerically to yield desired intervals $\intpol_{\alpha, m}$ and $\intpol_{\beta, m}$.
See Section \ref{sec:numericalExperiments} for numerical illustrations on synthetic and real-world data.

\textbf{Sufficient uniqueness condition.}
Despite the apparent complexity of the polarization conditions \eqref{eq:polynomialAlpha} and \eqref{eq:polynomialBeta}, a simple and interpretable sufficient condition for the essential uniqueness of QNMF with $P = 2$ can be formulated.
\begin{proposition}\label{prop:sufficientCondition}
  If the following conditions are satisfied:
  \begin{itemize}
    \item  $\exists \:m_1, m_2 \in \lbrace 1, 2, \ldots M\rbrace$ s.t.
    \begin{equation}
      \begin{cases}
        \Phi_{m_11} = 1, \: \Phi_{m_12}\bmmu_{m_12}\neq \bmmu_{m_11}\\
        \displaystyle I_{m_11} \geq \frac{1}{2}\frac{1-\Phi_{m_12}^2}{1-\Phi_{m_12}\ip{\bmmu_{m_11}, \bmmu_{m_22}}}I_{m_12}\\
        \Phi_{2}(k_2) = 1, \: \Phi_{m_21}\bmmu_{m_21}\neq \bmmu_{m_22}\\
        \displaystyle I_{m_22} \geq \frac{1}{2}\frac{1-\Phi_{m_21}^2}{1-\Phi_{m_21}\ip{\bmmu_{m_22}, \bmmu_{m_21}}}I_{m_21}
      \end{cases}\tag{C1}\label{eq:CS1}
    \end{equation}
    \item $\exists \:n_1, n_2 \in \lbrace 1, 2, \ldots N\rbrace, n_1 \neq n_2$ s.t.
    \begin{equation}
      \begin{cases}
        h_{1n_1} > 0 \text{ and } h_{2n_1} = 0\\
        h_{2n_2} > 0 \text{ and } h_{1n_2} = 0
      \end{cases}\tag{C2}\label{eq:CS2}
    \end{equation}
  \end{itemize}
  then the QNMF $\bfX = \bfW\bfH$ is essentially unique.
\end{proposition}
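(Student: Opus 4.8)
The plan is to reduce essential uniqueness to showing that the only admissible ambiguity transformation is the identity, i.e.\ that $(\alpha,\beta)=(0,0)$ is the unique point of $\intqnmf_\alpha\times\intqnmf_\beta$. First I would note that, because the denominator $1-\alpha-\beta$ stays strictly positive under the standing assumption $\alpha+\beta<1$, the admissibility constraints decouple: non-negativity of $\tilde{h}_{2n}$ and $\real\tilde{w}_{m1}$ together with the polarization constraint on $\tilde{w}_{m1}$ involve only $\alpha$, whereas the complementary constraints involve only $\beta$. Hence the admissible set is exactly $(\intqnmf_\alpha\times\intqnmf_\beta)\cap\{\alpha+\beta<1\}$, with the structure recorded in \eqref{eq:intQMFrwting1}. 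Since $\bfT(\alpha,\beta)$ models precisely the non-trivial ambiguities, it suffices to prove $\intqnmf_\alpha=\intqnmf_\beta=\{0\}$; by Definition \ref{def:uniquenessQNMF} this is essential uniqueness.

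Second, I would use \eqref{eq:CS2} to pin the NMF intervals from above. The column $n_1$ with $h_{1n_1}>0$, $h_{2n_1}=0$ makes the ratio $h_{2n_1}/(h_{1n_1}+h_{2n_1})$ vanish, so by \eqref{eq:NMFupperBounds} we get $\alpha_{\text{max}}^{\text{NMF}}=\min_n h_{2n}/(h_{1n}+h_{2n})=0$; symmetrically $n_2$ forces $\beta_{\text{max}}^{\text{NMF}}=0$. Consequently $\intnmf_\alpha\subseteq(-\infty,0]$ and $\intnmf_\beta\subseteq(-\infty,0]$, so every admissible $\alpha$ and $\beta$ obeys $\alpha\leq 0$ and $\beta\leq 0$.

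Third, and this is the crux, I would analyze the polarization polynomial \eqref{eq:polynomialAlpha} at the index $m_1$ supplied by \eqref{eq:CS1}, writing it as $A\alpha^2+2B\alpha+C\geq 0$. Using $\Phi_{m_1 1}=1$ gives $C=I_{m_1 1}^2(1-\Phi_{m_1 1}^2)=0$, so the polynomial factors as $\alpha(A\alpha+2B)$ with a root at $0$. The condition $\Phi_{m_1 2}\bmmu_{m_1 2}\neq\bmmu_{m_1 1}$ gives, via Cauchy--Schwarz on unit vectors, $1-\Phi_{m_1 2}\ip{\bmmu_{m_1 1},\bmmu_{m_1 2}}>0$, hence $B=I_{m_1 1}I_{m_1 2}(1-\Phi_{m_1 2}\ip{\bmmu_{m_1 1},\bmmu_{m_1 2}})>0$; and the intensity inequality in \eqref{eq:CS1} is exactly the rearrangement of $A\leq 0$ (dividing by $I_{m_1 2}>0$ is legitimate precisely because that same denominator is positive). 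With $C=0$, $B>0$, $A\leq 0$, a short sign study of $\alpha(A\alpha+2B)$ shows that $\{\alpha\leq 0 : A\alpha^2+2B\alpha\geq 0\}=\{0\}$: if $A=0$ the constraint reads $2B\alpha\geq 0$, forcing $\alpha\geq 0$, and if $A<0$ the nonzero root $-2B/A$ is strictly positive, so the parabola is negative throughout $(-\infty,0)$. Intersecting with $\intnmf_\alpha\subseteq(-\infty,0]$ yields $\intqnmf_\alpha\subseteq\{0\}$, and since $(0,0)$ is always admissible, $\intqnmf_\alpha=\{0\}$.

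Finally, the argument for $\beta$ is verbatim the same after exchanging the roles of the two sources (source $2$ fully polarized at $m_2$, and the symmetric intensity clause of \eqref{eq:CS1}), giving $\intqnmf_\beta=\{0\}$. Combining the three steps, the only admissible transformation is $\bfT(0,0)=\mathbf{I}$, which is the desired essential uniqueness. The main obstacle I anticipate is the bookkeeping of the quadratic's sign: in general the leading coefficient $A$ has no fixed sign (as already remarked after \eqref{eq:polynomialBeta}), so the decisive point is that the three clauses of \eqref{eq:CS1} are tailored to force the triple $(A\leq 0,\,B>0,\,C=0)$, which is exactly what collapses the polarization interval to $\{0\}$ on the negative half-line; checking that the intensity clause is algebraically equivalent to $A\leq 0$ is the one computation that must be carried out carefully.
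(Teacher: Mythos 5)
Your proposal is correct and follows essentially the same route as the paper's proof: condition \eqref{eq:CS2} forces the NMF upper bounds \eqref{eq:NMFupperBounds} to zero so that $\intnmf_\alpha,\intnmf_\beta\subseteq(-\infty,0]$, while \eqref{eq:CS1} collapses the polarization polynomials \eqref{eq:polynomialAlpha}--\eqref{eq:polynomialBeta} at $m_1,m_2$ to intervals of the form $[0,\alpha_0]$, and intersecting gives $\intqnmf_\alpha=\intqnmf_\beta=\{0\}$. Your explicit sign analysis of the quadratic (showing $C=0$, $B>0$, $A\leq 0$ and that the intensity clause of \eqref{eq:CS1} is exactly $A\leq 0$) is a more detailed rendering of the step the paper only sketches.
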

\begin{proof}
See Appendix \ref{app:proofCS}.
\end{proof}
On the one hand, condition \eqref{eq:CS2} is identical to the one found for the activation factor in the standard NMF case \cite{moussaoui2005non,laurberg2008theorems,huang2014non}.
On the other hand, condition \eqref{eq:CS1} illustrates the key role played by polarization information.
Compared to the usual NMF sufficient conditions for the 2 sources case \cite{moussaoui2005non,laurberg2008theorems,huang2014non}, it does not require each source to vanish alternatively.
Condition \eqref{eq:CS1} shows that it is sufficient that there exists two indices $m_1, m_2$ such that the first source is fully polarized at $m_1$ (and the second one exhibits a different polarization state at $m_1$) and the second one is fully polarized at $m_2$ (and the first one exhibits a different polarization state at $m_2$), and that respective intensities of each source should be larger than the other one by a factor in $[0, 1]$.

Note that \eqref{eq:CS1} does not require at all $m_1 \neq m_2$.
In fact, when $m_1 = m_2 = m$,  it becomes condition \eqref{eq:CS1p}
\begin{equation}
  \begin{cases}
    \Phi_{m1} = \Phi_{m2} = 1, \: \bmmu_{m1} \neq \bmmu_{m2}\\
    I_{m1} > 0, \: I_{m2} > 0
  \end{cases}\:.\tag{CS1'}\label{eq:CS1p}
\end{equation}
In other words, if there exist $m$ such that both sources exhibit some arbitrary energy and are fully polarized with different polarization axes and if \eqref{eq:CS2} is satisfied,  then the QNMF $\bfX = \bfW\bfH$ is essentially unique.

The sufficient condition given in Proposition \ref{prop:sufficientCondition} for the uniqueness of QNMF in the two source case appears remarkably broad compared to the NMF case \cite{moussaoui2005non}.
By incorporating polarization information and its associated constraints, QNMF permits to achieve uniqueness even when the sources never vanish -- a typical case where NMF is known to be not unique \cite{huang2014non,moussaoui2005non,fu2019nonnegative}.

\subsection{General case $(P \geq 2)$}
\label{sub:generalCaseIdentifiability}

The study of the uniqueness of QNMF $\bfX = \bfW\bfH$ for an arbitrary number of $P$ sources is much more cumbersome than the $P=2$ sources case, as the  previous explicit modeling of indeterminacies is no longer amenable.
However, the relationship between QNMF and NMF (see Section \ref{sub:relationNMF}) allows to formulate a necessary condition for uniqueness where standard NMF would fail.

Uniqueness of the NMF $\real \bfX = \left[\real\bfW\right] \bfH$ requires each column of $\real\bfW$ to contain at least one entry equal to zero\footnote{Note that requiring some entries of $\real\bfW$ to be zero is equivalent to require the corresponding entries of $\bfW$ to be zero due to the nature of \eqref{eq:polarizationConstraint}. } \cite{laurberg2008theorems,moussaoui2005non,huang2014non}.
Unfortunately, this condition is often violated, notably in hyperspectral imaging applications, and thus much effort has been devoted to the design of suitable optimization criteria
\cite{craig1994minimum,fu2018identifiability,fu2019nonnegative,lin2015identifiability} to recover uniqueness of the (algorithmic) solution.

In contrast,  QNMF relaxes the NMF zero-entry necessary condition on $\bfW$, stated by Proposition \ref{prop:necessaryCondition} below.
\begin{proposition}[Necessary condition for uniqueness with non-vanishing sources]
  \label{prop:necessaryCondition}
Suppose that the QNMF $\bfX = \bfW\bfH$ is essentially unique such that $\real w_{mp} > 0$ for every $m, p$.
Then the following conditions are satisfied:
\begin{itemize}
  \item $\forall (p, q), \: p \neq q,$
  \begin{equation}
    \exists m \text{ s.t. } \Phi_{mp} = 1, \: \Phi_{mq}\bmmu_{mq} \neq \bmmu_{mp}\tag{A1}\label{eq:CN1}
  \end{equation}
  \item $\forall (p, q), \: p \neq q,$
  \begin{equation}
      \exists n \text{ s.t. } h_{pn} = 0 \text{ and } h_{qn} > 0\tag{A2}\label{eq:CN2}
  \end{equation}
\end{itemize}
\end{proposition}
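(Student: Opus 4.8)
The plan is to prove the contrapositive and to reduce the general $P$-source problem to the two-source analysis of Section \ref{sub:twoSourcesIdentifiability}. First I would observe that it suffices to consider real transformations $\bfT$ that equal the identity outside the $2\times 2$ principal block indexed by a fixed ordered pair $(p,q)$: such a $\bfT$ leaves every column of $\tilde{\bfW}$ and every row of $\tilde{\bfH}$ other than $p,q$ unchanged, so its admissibility is governed exactly by the two-source constraints \eqref{eq:mixingSourcesAlpha}--\eqref{eq:mixingActivBeta} applied to sources $p,q$ and parameterized by $\bfT(\alpha,\beta)$ as in \eqref{eq:TmatrixExplicit}. Because the normalized parameterization with $\alpha+\beta<1$ already quotients out scale and permutation, any nontrivial $(\alpha,\beta)\neq(0,0)$ produces a full $\bfT$ that is not of the form $\bfD\bfP$; hence essential uniqueness in the sense of Definition \ref{def:uniquenessQNMF} forces the admissible region of each pairwise block to collapse to the origin. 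The goal is then to show that this collapse cannot happen at a pair for which \eqref{eq:CN1} or \eqref{eq:CN2} fails.

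The central fact I would exploit is that the hypothesis $\real w_{mp}>0$ makes the NMF source-positivity constraint $\real\tilde{w}_{mp}\geq 0$ strictly slack near the origin, so that local admissibility of a small perturbation is dictated solely by the activation non-negativity of $\tilde{\bfH}$ and the polarization inequalities \eqref{eq:polynomialAlpha}--\eqref{eq:polynomialBeta}. I would then treat the four one-sided directions $\pm\alpha,\pm\beta$ separately. In the \emph{activation} directions $\alpha>0$ and $\beta>0$ the binding constraint is $\tilde{\bfH}\geq 0$, and the NMF upper bound \eqref{eq:NMFupperBounds} vanishes precisely when the corresponding instance of \eqref{eq:CN2} holds; since the polarization polynomials impose no upper cap near $0$, a failure of \eqref{eq:CN2} leaves a small admissible perturbation. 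In the \emph{polarization} directions $\alpha<0$ and $\beta<0$ the activation constraint is automatically slack, so only the polarization inequality can obstruct. Evaluating \eqref{eq:polynomialAlpha} (resp. \eqref{eq:polynomialBeta}) at the origin shows that its constant term is $I_{mp}^2(1-\Phi_{mp}^2)$, which vanishes iff $\Phi_{mp}=1$, and that at such an index the linear coefficient reduces to $2I_{mp}I_{mq}\left(1-\Phi_{mq}\ip{\bmmu_{mp},\bmmu_{mq}}\right)$, strictly positive iff $\Phi_{mq}\bmmu_{mq}\neq\bmmu_{mp}$. Thus the polarization constraint forbids the one-sided move at $m$ exactly when the instance of \eqref{eq:CN1} holds at $m$; if \eqref{eq:CN1} fails, then at every $m$ either the constraint is slack ($\Phi_{mp}<1$) or identically tight (the degenerate case $\Phi_{mp}=\Phi_{mq}=1$, $\bmmu_{mp}=\bmmu_{mq}$, where the polynomial vanishes identically), so no obstruction remains and a nontrivial perturbation survives.

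Assembling the four directions, pinning the $(p,q)$ block to the origin requires both \eqref{eq:CN1} and \eqref{eq:CN2} to hold for both orderings of the pair; since the proposition quantifies over all ordered pairs, essential uniqueness forces \eqref{eq:CN1} and \eqref{eq:CN2} globally. I expect the main obstacle to be the careful first-order analysis of the polarization polynomials \eqref{eq:polynomialAlpha}--\eqref{eq:polynomialBeta} at the origin: one must verify that a strictly positive linear coefficient genuinely blocks one side while leaving the other free, and separately dispose of the degenerate case in which both sources are fully polarized along a common axis, where the quadratic degenerates and the constraint holds with equality for all perturbations. The remaining care lies in confirming that the block reduction is airtight, namely that a small nontrivial block perturbation can never coincide with an allowed scale-permutation ambiguity.
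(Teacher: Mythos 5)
Your proposal is correct and follows essentially the same route as the paper's proof: argue by contraposition, perturb a single pair of sources by an elementary real transformation equal to the identity outside a $2\times 2$ block, and show that the polarization inequality \eqref{eq:polynomialAlpha} obstructs the one-sided perturbation of the source factor exactly when \eqref{eq:CN1} holds, while the activation non-negativity obstructs the opposite direction exactly when \eqref{eq:CN2} holds. The only differences are in execution: the paper uses a one-parameter shear $\bfT^{\alpha}_{p_0q_0}$ and a discriminant/root computation, whereas you read off the constant and linear coefficients of the quadratic at the origin, which is a slightly cleaner path to the same case analysis (including the degenerate case where both sources are fully polarized along a common axis).
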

\begin{proof}
  See Appendix \ref{app:proofCN}.
\end{proof}
Proposition \ref{prop:necessaryCondition} shows that a necessary condition for uniqueness of the QNMF with non-vanishing sources is that, for any distinct pair of sources $(p, q)$, two criteria are satisfied: \eqref{eq:CN1} there exists an index $m$ such that the source $p$ is fully polarized and the source $q$ exhibits a different polarization state (it can be fully polarized but with a different polarization axis); \eqref{eq:CN2} there is an index $n$ such that the source $q$ is active while the source $p$ is not.
Finally, note that condition \eqref{eq:CN2} on the activation factor is exactly the same as in the NMF case \cite{moussaoui2005non,laurberg2008theorems,huang2014non}.

In our opinion, the result stated by Proposition \ref{prop:necessaryCondition} is particularly interesting: it suggests that, in the general case of $P \geq 2$ sources, there may exist sufficient uniqueness conditions for QNMF that do not require at all the source factor $\bfW$ to contains zero-entries, unlike NMF.
This conjecture is motivated by Proposition \ref{prop:sufficientCondition}, which provides such a sufficient condition for the $P=2$ sources case.
It is also supported by our numerical experiments in Section \ref{sub:spectroUnmixingQALS}, where QNMF uniqueness is observed on a synthetic, fully polarized spectro-polarimetric dataset with $P=3$ sources.
QNMF might offer a convenient framework to deal with situations where NMF alone does not guarantee uniqueness, such as with non-vanishing sources in hyperspectral imaging.
This motivates us to search for general sufficient uniqueness conditions in future work.

\section{An algorithm for QNMF}
\label{sec:algorithm}
This section deals with the practical resolution of the QNMF problem.
For a given value $P$, its resolution can be seen as an optimization procedure,
\begin{equation}\label{eq:genericCost}
  \min_{\substack{\bfW \in \bbHS^{M\times P}\\\bfH \in \bbR_+^{P\times N }}} D(\bfX, \bfW\bfH)
\end{equation}
where $D: \bbH^{M\times N} \times \bbH^{M\times N} \rightarrow \bbR_+$ can be an arbitary cost function for quaternion matrices.
For simplicity, we choose here the Frobenius distance between quaternion matrices, referred to as the Euclidean cost in the sequel:
\begin{equation}
  \begin{split}
      D(\bfX, \bfW\bfH) &= \Vert \bfX - \bfW\bfH\Vert_F^2\\
      &= \sum_{m, n} \vert X_{mn} - (\bfW\bfH)_{mn}\vert^2\:.
    \end{split}
\end{equation}
The generic QNMF optimization problem \eqref{eq:genericCost} becomes
\begin{equation}\label{eq:EuclideanCost}
  \min_{\substack{\bfW \in \bbHS^{M\times P}\\\bfH \in \bbR_+^{P\times N }}}  \Vert \bfX - \bfW\bfH \Vert^2_F\:.
\end{equation}
The formulation of the resolution of the QNMF problem \eqref{eq:genericCost} appears much alike the standard NMF with Euclidean cost.
However, two fundamental questions need to be answered: \emph{(i)} is the constraint \eqref{eq:polarizationConstraint} easy to implement? and \emph{(ii)} can we optimize w.r.t. $\bfW$ directly in the quaternion domain?

Fortunately, the answers to these two questions are affirmative.
For \emph{(i)}, the answer relies on the key link between \eqref{eq:polarizationConstraint} and the set of non-negative Hermitian 2-by-2 matrices.
A positive answer to \emph{(ii)} is made possible by the recent advent of the theory of quaternion-domain derivatives \cite{xu2015theory,xu2016optimization,xu2015enabling,mandic2011quaternion} -- the so-called \emph{generalized $\bbH\bbR$ calculus}.

Section \ref{sub:projectionConstraints} below explains how constraints on $\bfW$ and $\bfH$ factors are implemented.
Section \ref{sub:quaternionALS} describes the proposed alternating least squares strategy to solve \eqref{eq:EuclideanCost}.
For completeness, Appendix \ref{app:sub:optimH} provides a quick introduction to quaternion optimization.
Detailed computation of QNMF updates are found in Appendices \ref{app:sub:updateW} and \ref{app:sub:updateH}.

\subsection{Projections onto constraints}
\label{sub:projectionConstraints}

Projections onto the constraint sets of $\bfW$ and $\bfH$ are a cornerstone of any method attempting to solve numerically the QNMF factorization problem \eqref{eq:genericCost} or \eqref{eq:EuclideanCost}.
Projection operators onto constraints $\bbR_+$ and \eqref{eq:polarizationConstraint} are denoted by $\Pi_{\bbR_+}$ and $\Pi_{\bbH_{\calS}}$, respectively.
The projection $\Pi_{\bbR_+}$ is classical \cite{lee1999learning,lee2001algorithms}:
\begin{equation}
  \left[\Pi_{\bbR^+}(\bfM)\right]_{pn} = \max(0, (\bfM)_{pn})
\end{equation}
where $\bfM \in \bbR^{P\times N}$, $1 \leq p \leq P$, $1 \leq n\leq N$.

To compute $\Pi_{\bbH_{\calS}}$, we use the fact that the constraint \eqref{eq:polarizationConstraint} can be derived from non-negativity (\ie semi-definiteness) of 2-by-2 complex  Hermitian matrices -- as described in Section \ref{sub:polarizationConstraint}.
Consider an arbitrary matrix $\bfM \in \bbH^{M\times P}$ with $(m,p)$ coefficient $(\bfM)_{mp} = M_{mp} = \real M_{mp}+ \bmi \imagi M_{mp} + \bmj \imagj M_{mp} + \bmk \imagk M_{mp}$.
Then define the following bijective mapping $f: \bbH \rightarrow \bbC^{2\times2}$ such that
\begin{equation}
  \begin{split}
    &f\left(M_{mp}\right) \\
    &\triangleq
    \frac{1}{2}\begin{bmatrix}
      \real M_{mp} + \imagj M_{mp} & \imagk M_{mp} + \bmi \imagi \:M_{mp}\\
      \imagk M_{mp} - \bmi \imagi \:M_{mp} &   \real M_{mp} - \imagj \:M_{mp}
    \end{bmatrix}
  \end{split}\:.\label{eq:mappingQuatHermitian}
\end{equation}
By construction, $f\left(M_{mp}\right)$ is an Hermitian matrix.
Its projection onto the set of non-negative Hermitian matrices $\calK_+^2$ is well-known \cite[p. 399]{boyd2004convex}, \ie
\begin{equation}
  \Pi_{\calK_+^2}\left[f\left(M_{mp}\right)\right] = \sum_{i=1}^2 \max(0, \eta_i)\bfv_i\bfv_i^\dagger
\end{equation}
where $\eta_i$ and $\bfv_i$ are the $i$-th eigenvalue and eigenvector of the matrix $f\left(M_{mp}\right)$, respectively.
Since $\Pi_{\calK_+^2}\left[f\left(M_{mp}\right)\right]$ is a 2-by-2 non-negative Hermitian matrix, it can be expressed as
\begin{equation}
  \Pi_{\calK_+^2}\left[f\left(M_{mp}\right)\right] = \begin{bmatrix}
    a & c \\
    \overline{c} & b
\end{bmatrix}\label{eq:projectionPSDexplicitMatrix}
\end{equation}
where $a\geq 0$, $b\geq 0$ and $c \in \bbC$ such that $ab-\vert c\vert^2 \geq 0$ on account of positive semi-definiteness of $\Pi_{\calK_+^2}\left[f\left(M_{mp}\right)\right]$.
Remark that \eqref{eq:projectionPSDexplicitMatrix} is of the form \eqref{eq:mappingQuatHermitian}: since the mapping $f$ is bijective, the projection $\left[\Pi_{\bbH_{\calS}}(\bfM)\right]_{mp}$ is uniquely given by
\begin{equation}
  \left[\Pi_{\bbH_{\calS}}(\bfM)\right]_{mp} = a+b + 2\bmi \imag c + \bmj (a-b) + 2\bmk \real c\:,
\end{equation}
where $a, b$ and $c$ as defined above.

\subsection{Proposed algorithm: quaternion alternating least squares}
\label{sub:quaternionALS}

The proposed algorithm adopts a popular strategy for solving the QNMF problem, based on the alternating constrained minimization of \eqref{eq:EuclideanCost} w.r.t. $\bfH$ and $\bfW$.
This choice is motivated by the fact that, whereas the Euclidean cost \eqref{eq:EuclideanCost} is not convex in both $\bfW$ and $\bfH$, it is convex in each variable separately.
After initialization of the factors, the iteration $r > 0$ reads
\begin{align}
  \bfH_{r+1} &\leftarrow \argmin_{\bfH \in \bbR_+^{P\times N }} \Vert \bfX - \bfW_{r}\bfH\Vert^2_F\label{eq:iterationrConstrainedH}\\
  \bfW_{r+1} &\leftarrow \argmin_{\bfW \in \bbHS^{M\times P}} \Vert \bfX - \bfW\bfH_{r+1} \Vert^2_F\label{eq:iterationrConstrainedW}\:.
\end{align}
Equations \eqref{eq:iterationrConstrainedH}-\eqref{eq:iterationrConstrainedW} describe a two-block coordinate descent (CD) scheme in the quaternion domain.
Two-block real-domain CD schemes are known to converge to a stationary point under mild conditions \cite{bertsekas1997nonlinear,grippo2000convergence,kim2014algorithms}.
These convergence results can be readily transposed to the quaternion case by identifying $\bbH$ with $\bbR^4$.
Formally, since $\bbR^+$ and $\bbHS$ are closed convex sets, every limit point of the sequence $(\bfW_r, \bfH_r)$ generated by the two-block CD framework \eqref{eq:iterationrConstrainedH}-\eqref{eq:iterationrConstrainedW} is a stationary point of \eqref{eq:EuclideanCost}.

Convergence guarantees of the two-block CD scheme require to solve exactly both subproblems \eqref{eq:iterationrConstrainedH} and \eqref{eq:iterationrConstrainedW}.
However, the associated constraints make it difficult to obtain these updates directly in a closed form.
The quaternion nature of the data $\bfX$ and source factor $\bfW$ adds a supplementary degree of difficulty due to non-commutativity of quaternion multiplication.
Solving exactly each subproblem \eqref{eq:iterationrConstrainedH}-\eqref{eq:iterationrConstrainedW} requires to properly extend constrained optimization algorithms over the set of quaternions, which is beyond the scope of this paper.

Instead, we propose a simpler, closed-form algorithm that approximately solves each subproblem at each iteration, at the price of losing convergence guarantees.
The main idea is the following: at a given iteration $r$, for each factor, one solves the unconstrained least-squares problem and projects the obtained solution onto the corresponding constraint:
\begin{align}
   \bfH_{r+1} & \leftarrow \Pi_{\bbR_+}\left[\argmin_{\bfH} \Vert \bfX - \bfW_r\bfH\Vert^2_F\right]\\
   \bfW_{r+1} &\leftarrow \Pi_{\bbHS}\left[\argmin_{\bfW} \Vert \bfX - \bfW\bfH_{r+1} \Vert^2_F\right]\:.
\end{align}
Due to its resemblance with the usual alternating least squares (ALS) algorithm \cite{paatero1994positive} for NMF, we call this strategy \emph{quaternion alternating least squares} (QALS).
Derivation of explicit updates for unconstrained least-squares problems requires special care because of the quaternion nature of $\bfX$ and $\bfW$.
It involves the notion of quaternion derivative as well as cautious handling of quaternion non-commutivity, see detailed computations provided in Appendices \ref{app:sub:updateH} and Appendix \ref{app:sub:updateW}.
As a result, one gets the explicit updates:
\begin{align}
  \bfH_{r+1}  &\leftarrow \Pi_{\bbR_+}\left[\left(\real\left[ \bfW^\top_r\overline{\bfW}_r\:\right]\right)^{-1}\real\left[ \bfW^\top_r\overline{\bfX}\:\right]\right]\label{eq:updateHexplicit}\\
  \bfW_{r+1} &\leftarrow \Pi_{\bbHS}\left[ \bfX\bfH_{r+1}^\transp\left(\bfH_{r+1}\bfH^\transp_{r+1}\right)^{-1}\right]\:.\label{eq:updateWexplicit}
\end{align}
Projections onto constraints sets are carried out as described in Section \ref{sub:projectionConstraints}.

The proposed algorithm is remarkably simple and cheap.
Performing optimization directly in the quaternion-domain yields updates expressions that are very much alike the standard ALS algorithm for NMF \cite{berry2007algorithms}.
However, this apparent simplicity should not conceal the important underlying technical details presented in Appendix \ref{app:sec:quaternionALS}.
Such computations would not have been amenable without using the powerful theory of quaternion derivatives introduced recently \cite{xu2015theory,xu2016optimization,xu2015enabling,mandic2011quaternion}.
Despite the lack of theoretical guarantees on its convergence, the quaternion ALS algorithm provides a good baseline for the resolution of QNMF with reasonably good results in most situations.
Thus, it paves the way to further developments of more sophisticated algorithms for the QNMF problem.

\pagebreak
\section{Numerical experiments}
\label{sec:numericalExperiments}

This section illustrates the relevance of the proposed approach by performing numerical experiments on synthetic data.
Section \ref{sub:polarIdentifiability} provides numerical evidence to the key role played by polarization properties onto the range of admissible solutions for the 2 sources case.
Section \ref{sub:spectroUnmixingQALS} demonstrates the effectiveness of the QALS algorithm to solve the QNMF problem.

\subsection{Polarization information and identifiability}
\label{sub:polarIdentifiability}
\begin{figure*}
  \centering
  \includegraphics[width=\textwidth]{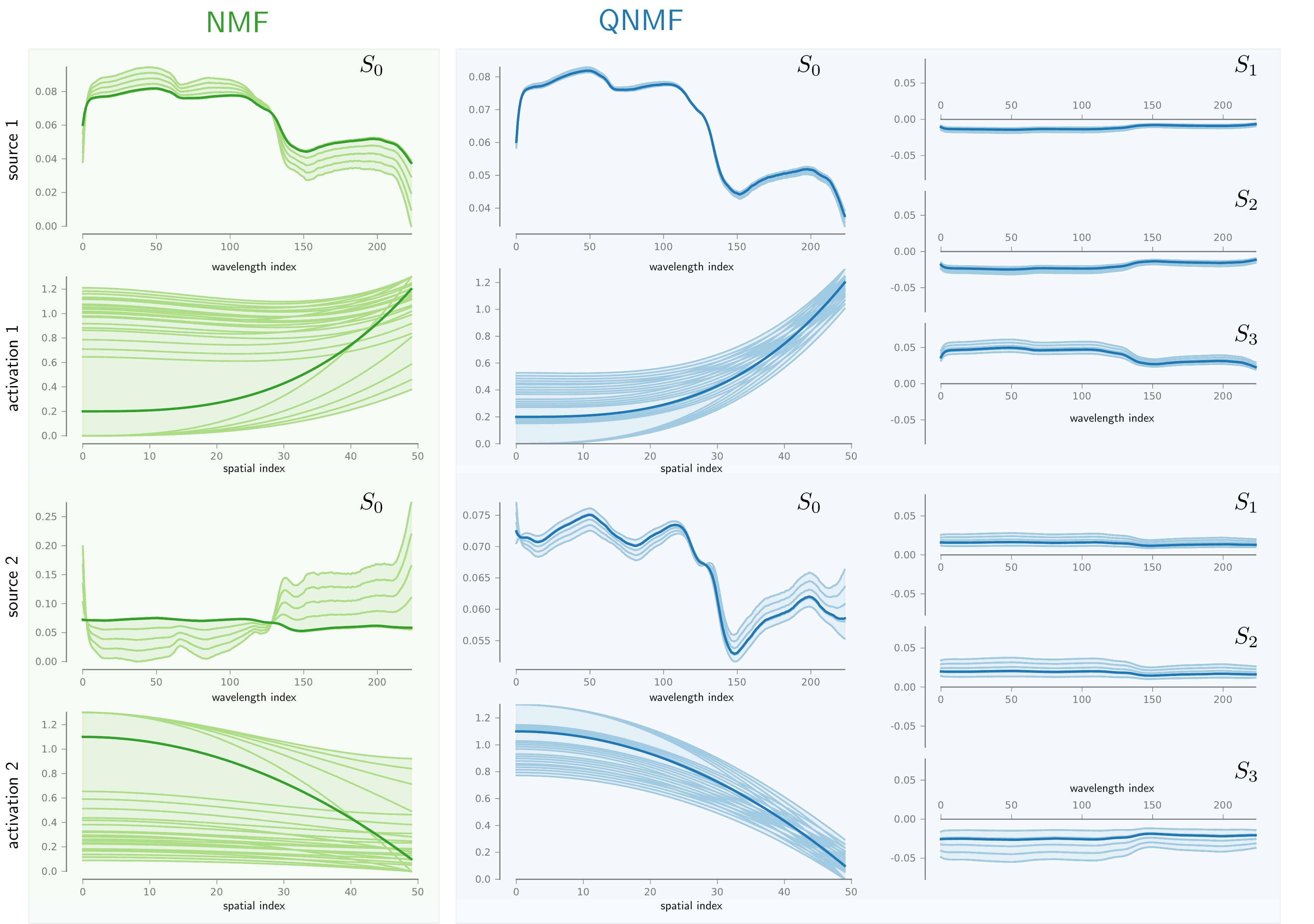}
  \caption{Range of admissible solutions for the QNMF $\bfX = \bfW\bfH$ and the associated NMF $\real\bfX = \left[\real \bfW\right] \bfH$. Taking advantage of polarization information with QNMF significantly improves model identifiability by shrinking the range of admissible solutions, compared to the sole use of intensity ($S_0$) information in the standard NMF approach. }\label{fig:admissibility}
\end{figure*}

This section illustrates the theoretical study provided in Section \ref{sub:twoSourcesIdentifiability} regarding the identifiability of QNMF in the $P=2$ sources case.
To emphasize how polarization impacts the range of admissible solutions, we choose a setting where the QNMF $\bfX = \bfW\bfH$ is not unique -- and thus, neither is the associated NMF $\real\bfX = \real\left[\bfW\right]\bfH$, see \eqref{eq:QNMFcofact}.

Both sources and associated activations are non-vanishing, \ie for all $m,n$, one has $w_{m1}, w_{m2}, h_{1n}, h_{2n} \neq 0$.
Intensity responses $\real w_{m1}, \real w_{m2}$ encode realistic $S_0$ parameters, obtained for instance in hyperspectral imaging of wood components \cite{schwanninger2011review}.
To simplify, both sources are chosen to exhibit constant polarization properties:
\begin{align}
  \Phi_{m1} = 0.7, \quad \bmmu_{m1} =  0.87\bmi - 0.25\bmj - 0.43\bmk\\
  \Phi_{m2} = 0.5, \quad \bmmu_{m2} = -0.71\bmi +  0.44\bmj +  0.55\bmk\:.
\end{align}
Each source being partially polarized, it precludes the QNMF to be unique, since the necessary conditions stated by Proposition \ref{prop:necessaryCondition} are not fulfilled.

Numerical computations yield the range of admissible solutions for the QNMF and its associated NMF problem.
Following Section \ref{sub:twoSourcesIdentifiability}, the range of admissible parameters $\alpha$ and $\beta$ defining the transformation matrix $\bfT(\alpha, \beta)$ are found for the QNMF
\begin{equation}
  \begin{split}
      \intqnmf_\alpha &= [-1.494 \cdot 10^{-1}, 7.692 \cdot 10^{-2}]\\ \intqnmf_\beta &= [-3.719 \cdot 10^{-1},1.538 \cdot 10^{-1}]
  \end{split}
\end{equation}
and for the associated NMF problem:
\begin{equation}
  \begin{split}
      \intnmf_\alpha &= [-1.799, 7.692 \cdot 10^{-2}]\\
      \intnmf_\beta &= [-1.303 \cdot 10^{1},1.538 \cdot 10^{-1}]\:.
  \end{split}
\end{equation}
Note that the QNMF and NMF intervals share the same upper bounds on $\alpha$ and $\beta$, which arise from the activation factor ratios \eqref{eq:NMFupperBounds}.

Fig. \ref{fig:admissibility} represents the set of admissible solutions corresponding to these NMF and QNMF intervals, respectively.
Thick lines indicate the original sources and activations factors.
Comparing respective ranges of solutions for $S_0$, it appears that QNMF significantly improves identifiability over standard NMF, by taking advantage of polarization information.
Improvements are found on both $S_0$ and activations factors, and importantly,
QNMF permits to reconstruct Stokes parameters $S_1, S_2, S_3$ with limited uncertainty.
These results illustrate how QNMF takes advantage of the strong discriminative power of polarization in order to improve model identifiability.

\subsection{Spectro-polarimetric blind unmixing using QALS}
\label{sub:spectroUnmixingQALS}

\begin{figure*}
  \centering
  \includegraphics[width=.8\textwidth]{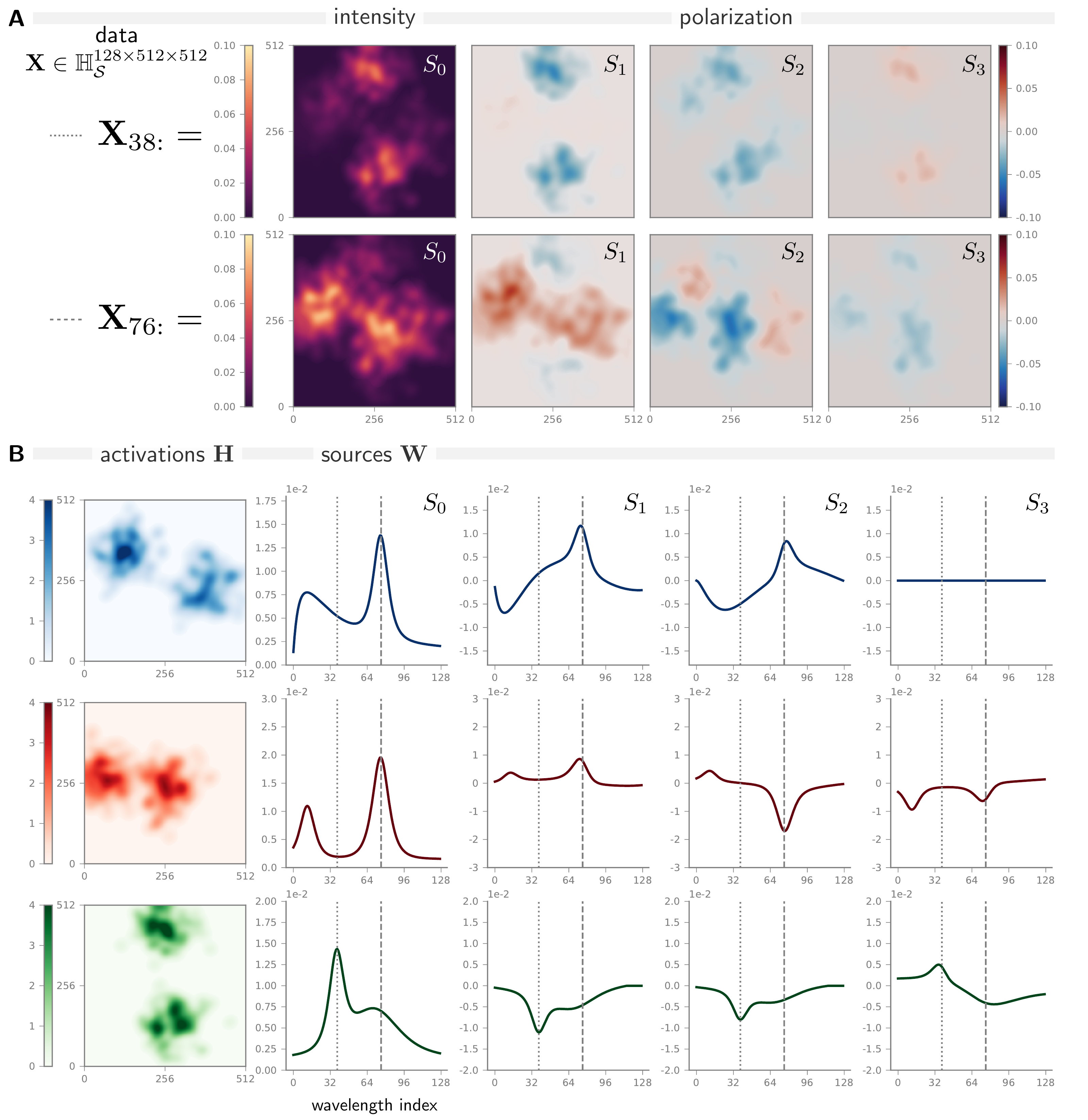}
  \caption{Blind source separation of spectro-polarimetric data using QALS. \textbf{A} 2D intensity ($S_0$) and polarization ($S_1, S_2, S_3$) maps for two wavelength indices $m=38$ and $m=76$, corresponding to local intensity maxima of sources. \textbf{B} reconstructed 2D activations maps and sources Stokes spectral profiles for each of the three factors. Dotted and dashed lines correspond to wavelength indices $m=38$ and $m=76$, respectively. }\label{fig:ALS}
\end{figure*}

This section provides a first numerical validation of the QALS algorithm on realistic simulated spectro-polarimetric data.
It consists in a linear mixture of $P = 3$ sources imaged at $M = 128$ wavelengths and $N = 512 \times 512$ spatial locations.
Each source exhibits broad-band behavior, featuring non-vanishing intensity $S_0$ spectral profile together with spectrally varying polarization $S_1/S_0, S_2/S_0, S_3/S_0$ parameters.
For simplicity, we assume that each source is fully polarized.
Two-dimensional activations maps of each source are designed to represent a typical spectro-polarimetric imaging setting, \emph{e.g.} solar spectro-polarimetry \cite{mccauley2019low,orozco2007quiet}, and such that the resulting activation matrix $\bfH$ satisfies the separability condition \cite{huang2014non}.
As a result, the designed $\bfW$ and $\bfH$ matrices satisfy the necessary conditions for QNMF identifiability stated in Proposition \ref{prop:necessaryCondition}.
Fig. \ref{fig:ALS}A shows simulated spectro-polarimetric data $\bfX = \bfW\bfH$ for two distinct wavelengths indices $m=38$ and $m=76$.

We show that QALS permits to effectively recover the QNMF factors.
We adopt a random initialization strategy.
The initial activation $\bfH_0$ is chosen as a matrix of i.i.d. entries drawn from $\calU([0, 1])$, the uniform distribution on $[0, 1]$.
The initial source factor $\bfW_0$ has i.i.d. entries drawn from the quaternion circular unit Gaussian distribution \cite{le2017geometry} projected onto the constraint \eqref{eq:polarizationConstraint}.
We then alternate Eqs. \eqref{eq:updateHexplicit}--\eqref{eq:updateWexplicit} until convergence.
The latter was assessed by monitoring the relative error $\epsilon_r = \Vert \bfX-\bfW_r\bfH_r\Vert^2_F/\Vert \bfX\Vert^2_F$ and stopping whenever the improvement was below a given threshold, \ie  $\vert \epsilon_r - \epsilon_{r-1} \vert\leq 10^{-5}$ in our case.
Repeating this procedure for $K = 100$ independent initializations of the QALS algorithm, it took on average 29 iterations for the algorithm to converge.
Out of $100$ initializations, we observed that the QALS algorithm recovered each time identical sources and activation factors, strongly indicating that the QNMF of the data $\bfX$ may be essentially unique.

Fig. \ref{fig:ALS}B depicts the 3 sources and corresponding activations factors recovered by the QALS algorithm for one arbitrary initialization.
Excellent match between reconstructed and ground truth factors is indicated by  very small relative errors, \ie $\epsilon_{\bfH} = \Vert \hat{\bfH}-\bfH\Vert^2_F/\Vert \bfH\Vert^2_F = 4.91\cdot 10^{-5}$ and
$\epsilon_{\bfW} = \Vert \hat{\bfW}-\bfW\Vert^2_F/\Vert \bfW\Vert^2_F = 3.92\cdot 10^{-5}$.
Looking at sources intensities $S_0$, we observe that they never vanish, \ie $S_0 > 0$, plus that the first and second sources exhibit very similar peaks.
This is a very challenging situation for NMF-based methods, which would require additional regularizations such as minimum volume \cite{lin2015identifiability,fu2015blind,fu2018identifiability} to ensure uniqueness of the solution.
It is thus remarkable that the QNMF model with the QALS algorithm is able to solve this blind source problem in a simple manner by leveraging polarization information.
One of the advantages of polarization in source separation problems can be directly seen by inspecting $\bfX_{76:}$ in Fig. \ref{fig:ALS}A: while it is not possible to distinguish 2 sources using $S_0$ only, the distinctive polarization pattern on $S_2$ makes it easy to identify source 1 and source 2.

These simulation results demonstrate the ability of the proposed QALS algorithm to effectively solve the QNMF problem.
Despite no associated provable convergence guarantees, QALS offers a simple, computationally efficient algorithm and provides a baseline for the development of more advanced algorithms for QNMF.

\section{Conclusion}
\label{sec:conclusion}

This paper has introduced a new powerful tool called quaternion non-negative matrix factorization (QNMF), which generalizes the well-known concept of non-negative matrix factorization (NMF) to the case of polarized signals.
The algebraic representation of Stokes parameters using quaternions, together with the generalization of the non-negativity constraint on Stokes parameters made it possible to formulate and prove uniqueness results that may have been cumbersome to obtain otherwise.
We unveiled the relation between NMF and QNMF, showing that NMF sufficient conditions are also sufficient uniqueness conditions for QNMF.
Remarkably, we have shown that QNMF improves model identifiability over NMF by leveraging the complete information available in the four Stokes parameters.
This key role played by polarization information is stated by Proposition \ref{prop:sufficientCondition} for the two sources case and by Proposition \ref{prop:necessaryCondition} for the general case.
Furthermore, taking advantage of recent results in quaternion optimization, we have proposed a simple yet efficient algorithm for solving the QNMF problem in practice.
These first results appear very promising and pave the way to future work on both theoretical and methodological aspects of QNMF.
In particular, it is an open question whether one can generalize to the QNMF case the notion of non-negative rank \cite{cohen1993nonnegative} and its related properties.
This fundamental question is intimately related to the definition of a polarized source; as such, it would greatly improve our understanding of the role played by polarization diversity in blind source separation problems.
\appendices

\section{Uniqueness conditions for QNMF}

\subsection{Proof of Proposition \ref{prop:sufficientCondition}}
\label{app:proofCS}

Suppose that $P=2$ and that the QNMF $\bfX = \bfW\bfH$ exists.
We suppose that conditions \eqref{eq:CS1} and \eqref{eq:CS2} are satisfied.
To show that the QNMF is identifiable, we prove that intervals $\intqnmf_\alpha$ and $\intqnmf_\beta$ defined in \eqref{eq:intQMFrwting1}-\eqref{eq:intQMFrwting2} are restricted to $\lbrace 0 \rbrace$.

Condition \eqref{eq:CS1} permits to simplify \eqref{eq:polynomialAlpha} and \eqref{eq:polynomialBeta} for $m_1, m_2$ like
\begin{align}
  &\alpha^2\left[I_{m_12}^2(1-\Phi_{m_12}^2) - 2I_{m_11}I_{m_12}(1 - \Phi_{m_12} \ip{\bmmu_{m_11}, \bmmu_{m_12}}) \right]\nonumber\\
    &+2\alpha\left[I_{m_11}I_{m_12}(1 - \Phi_{m_12} \ip{\bmmu_{m_11}, \bmmu_{m_12}})\right]  \geq 0\label{eq:polynomialAlphaSimp}\\
  &\beta^2\left[I_{m_21}^2(1-\Phi_{m_21}^2) - 2I_{m_21}I_{m_22}(1 - \Phi_{m_21} \ip{\bmmu_{m_21}, \bmmu_{m_22}}) \right]\nonumber\\
     +&2\beta\left[I_{m_21}I_{m_22}(1 - \Phi_{m_21}\Phi_{m_22} \ip{\bmmu_{m_21}, \bmmu_{m_22}})\right] \geq 0\:.\label{eq:polynomialBetaSimp}
\end{align}
Using inequalities linking $I_{m1}, I_{m2}$ for $m=m_1,m_2$ in \eqref{eq:CS1} yields the following domains of solutions to \eqref{eq:polynomialAlphaSimp} and \eqref{eq:polynomialBetaSimp}:
\begin{equation}
  \intpol_{\alpha,m_1} = [0, \alpha_0] \quad   \intpol_{\beta, m_2} = [0, \beta_0],
\end{equation}
where $\alpha_0, \beta_0 \geq 1$.
Condition \eqref{eq:CS2} imply that
\begin{equation}
  \intnmf_\alpha = [\alpha_0', 0] \quad \intnmf_\beta = [\beta_0', 0],
\end{equation}
where $\alpha_0', \beta_0' \leq 0$.
By intersection of intervals, one gets that $\intqnmf_\alpha$ and $\intqnmf_\beta$ defined in \eqref{eq:intQMFrwting1}-\eqref{eq:intQMFrwting2} are restricted to $\lbrace 0 \rbrace$, so that the QNMF is unique.

\subsection{Proof of Proposition \ref{prop:necessaryCondition}}
\label{app:proofCN}
Suppose that the QNMF $\bfX = \bfW\bfH$ exists and that sources never vanish, \ie $\real w_{mp} > 0$ for every $m, p$.
We obtain conditions \eqref{eq:CN1} and \eqref{eq:CN2} by contradiction: one at a time we suppose that either \eqref{eq:CN1} or \eqref{eq:CN2} is false and show that there exists a non-trivial matrix $\bfT$ leading to different factors $\tilde{\bfW} = \bfW\bfT$ and $\tilde{\bfH} = \bfT^{-1}\bfH$.

Suppose that \eqref{eq:CN1} is not satisfied.
Then, $\exists (p_0, q_0), p_0 \neq q_0$ such that
\begin{equation}
  \forall m, \: \left(\Phi_{mp_0} \in [0, 1) \text{ or } \Phi_{mq_0}\bmmu_{mq_0} = \bmmu_{mp_0}\right)\:,
\end{equation}
or equivalently,
\begin{equation}\label{eq:app:codntionNegationRWTING}
  \forall m, \: \left[\Phi_{mp_0} \in [0, 1) \text{ or } \left(\Phi_{mq_0} = 1 \text{ and } \bmmu_{mq_0} = \bmmu_{mp_0}\right) \right]\:,
\end{equation}
Now consider the transformation $\bfT_{p_0q_0}^\alpha \in \bbR^{P\times P}$ defined by
\begin{equation}
    \forall k, \ell \in \lbrace 1, 2, \ldots, P\rbrace,
    \begin{cases}
        \left(\bfT_{p_0q_0}^\alpha\right)_{kk} &= 1\\
        \left(\bfT_{p_0q_0}^\alpha\right)_{p_0q_0} &= -\alpha\\
        \left(\bfT_{p_0q_0}^\alpha\right)_{k\ell} &= 0 \text{ otherwise }\\
    \end{cases}\:.\label{eq:defTpq}
\end{equation}
By construction $\bfT_{p_0q_0}^\alpha$ does not correspond to a trivial ambiguity of the QNMF when $\alpha \neq 0$.
Note that $(\bfT_{p_0q_0}^\alpha)^{-1} = \bfT_{p_0q_0}^{-\alpha}$.
Consider the new factors $\tilde{\bfW} = \bfW\bfT_{p_0q_0}^\alpha$ and $\tilde{\bfH} = \bfT_{p_0q_0}^{-\alpha}\bfH$ such that $\bfX = \tilde{\bfW}\tilde{\bfH}$.
From \eqref{eq:defTpq}, the source indexed by $p_0$ is the only one affected by $\bfT_{p_0q_0}^\alpha$ like
\begin{equation}
  \forall m,\: \tilde{w}_{mp_0} = w_{mp_0} - \alpha w_{mq_0}\:.
\end{equation}
The corresponding activation coefficients read
\begin{equation}
  \forall n, \tilde{h}_{p_0n} = h_{p_0n} + \alpha h_{q_0n}\:.\label{eq:app:activCoeffCN}
\end{equation}
Supposing $\alpha > 0$, then $\tilde{h}_{p_0n} \geq 0$ for every $n$ by non-negativity of the matrix $\bfH$.
It remains to find at least one $\alpha > 0$ such that $\forall m,\: \tilde{w}_{mp_0} = w_{mp_0} - \alpha w_{mq_0}\in \bbHS$.
Imposing $\tilde{w}_{mp_0} \in \bbHS$ for every $m$, yields
\begin{equation}
  \forall m, \begin{cases}
    I_{mp_0} - \alpha I_{mq_0} & \geq 0\\
    \left\vert I_{mp_0}\Phi_{mp_0}\bmmu_{mp_0} -\alpha I_{mq_0}\Phi_{mq_0}\bmmu_{mq_0}\right\vert^2 &\leq \left(I_{mp_0} - \alpha I_{mq_0} \right)^2
  \end{cases}\label{eq;NN+polapp}
\end{equation}
where we used notation \eqref{eq:wmpRewriting} for convenience.
Since by assumption $I_{mp_0}, I_{q_0} > 0$, the first condition implies that $\alpha \leq \min_m (I_{mp_0}/I_{mq_0})$.
The second condition related to polarization can be rewritten as, for every $m$
\begin{equation}
  \begin{split}
    & \alpha^2I_{mq_0}^2(1-\Phi_{mq_0}^2) \\
      &- 2\alpha I_{mp_0}I_{mq_0}(1-\Phi_{mp_0}\Phi_{mq_0}\ip{\bmmu_{mp_0},\bmmu_{mq_0}})\\
      &+I_{mp_0}^2(1-\Phi_{mp_0}^2) \geq 0 \:.
  \end{split}\label{eq:app:polynomialCN}
\end{equation}
According to \eqref{eq:app:codntionNegationRWTING} two cases may occur, not exclusively from each other.
Fix $m$ and suppose that $\Phi_{mp_0} \in [0, 1)$.
Two cases are possible: either $\Phi_{mq_0} = 1$ or $\Phi_{mq_0} \in[0, 1)$.
Assume that $\Phi_{mq_0} = 1$, then \eqref{eq:app:polynomialCN} is true for $\alpha \leq I_{mp_0}(1-\Phi_{mp_0}^2)I_{mq_0}^{-1}((1-\Phi_{mp_0}\ip{\bmmu_{mp_0},\bmmu_{mq_0}})^{-1}$.
For $\Phi_{mq_0} \in [0, 1)$, since $I_{mq_0}^2(1-\Phi_{mq_0}^2) >0$ in virtue of assumptions, the discriminant $\Delta$ reads
\begin{equation}
  \begin{split}
      \Delta &= 4I_{mp_0}^2I_{mq_0}^2\left[(1-\Phi_{mp_0}\Phi_{mq_0}\ip{\bmmu_{mp_0},\bmmu_{mq_0}})^2\right.\\
      &\left. - (1-\Phi_{mp_0}^2)(1-\Phi_{mq_0}^2)\right]\:.
  \end{split}
\end{equation}
Ranges of parameters imply that $\Delta \geq 0$.
As a result, \eqref{eq:app:polynomialCN} is true for $\alpha \in (-\infty, \alpha_-^m] \cup [\alpha_+^m, +\infty)$ where $\alpha_{\pm}^m$ are the roots of the polynomial.
Sign of polynomial coefficients in \eqref{eq:app:polynomialCN} together with $0\leq \Phi_{mp_0} < 1$ imply that $\alpha_{\pm}^m > 0$.
Suppose now that $\Phi_{mq_0} = 1$ and $\bmmu_{mq_0} = \bmmu_{mp_0}$.
Taking $\Phi_{mp_0} = 1$ Eq. \eqref{eq:app:polynomialCN} becomes trivial. The case $\Phi_{mq_0} \in [0, 1)$ is included in the previous discussion.

Summarizing all cases, for every $m$ such that $\Phi_{mp_0} \in [0, 1)$ or $\Phi_{mq_0}\bmmu_{mq_0} = \bmmu_{mp_0}$, there always exists $\alpha > 0$ such that conditions \eqref{eq;NN+polapp} are satisfied, meaning that the QNMF is not unique.
This leads to the first necessary condition \eqref{eq:CN1}.
Repeating the same approach with $\alpha < 0$ yields condition \eqref{eq:CN2}.

\section{Derivation of quaternion ALS updates}
\label{app:sec:quaternionALS}
\subsection{Quaternion derivatives using generalized $\bbH\bbR$-calculus}
\label{app:sub:optimH}

Given a function $f :\bbH \rightarrow \bbH$ of the variable $q\in \bbH$, one outstanding question is: does its derivative $\partial f/\partial q$ exists and if so, how do we compute it?
Quaternion analytic functions \cite{sudbery1979quaternionic,watson2003generalized} are known to be differentiable, unfortunately they form a very restricted class of functions -- of little interest to signal processing.
In fact, cost functions $f :\bbH \rightarrow \bbR$ are not analytic \cite{sudbery1979quaternionic}, so that other strategies need to be deployed.
First, a pseudo-derivative approach can be used by treating $f(q)$ as a function $f(q_0, q_1, q_2, q_3)$ of the four real components of $q$.
As pointed out in \cite{xu2015theory}, such approach requires lengthy and cumbersome computations, thus limiting its applicatibility.
In order to compute derivatives directly in the quaternion domain, crucial steps have been made recently with the development of the $\bbH\bbR$-calculus \cite{mandic2011quaternion}, and subsequently the generalized $\bbH\bbR$-calculus \cite{xu2015enabling}.
The latter provides a complete framework including chain rule and product rules, extending naturally the $\bbC\bbR$-calculus \cite{kreutz2009complex} of complex-valued optimization to the case of quaternion functions.

For the quadratic-type of quaternion functions encountered in this paper, only important results related to quaternion optimization are necessary.
These are given in Proposition \ref{prop:stationaryPoint} and Table \ref{table:quaternionFunctionGradient} below.
We refer to  \cite{xu2015enabling,xu2015theory} for a complete description of the generalized $\bbH\bbR$-calculus.

Proposition \ref{prop:stationaryPoint} characterizes stationary points of cost functions of quaternion matrices.
\begin{proposition}[Theorem 4.1 \cite{xu2015theory}]\label{prop:stationaryPoint}
  Let $f: \bbH^{M\times N}\rightarrow \bbR$ and denote by $\nabla_{\bfQ}f$ its gradient w.r.t. $\bfQ$. Then,
  \begin{equation}
    \begin{split}
        \bfQ_0 \text{ is a stationary point } \\
        \Leftrightarrow \nabla_{\bfQ}f(\bfQ_0) = 0 \Leftrightarrow  \nabla_{\overline{\bfQ}}f(\bfQ_0) = 0
    \end{split}\:\:.
  \end{equation}
\end{proposition}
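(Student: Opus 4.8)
The plan is to reduce everything to the ordinary (real) stationarity condition and then to exploit the fact that the quaternion gradients furnished by the generalized $\bbH\bbR$-calculus are \emph{invertible} real-linear images of the vector of real partial derivatives. Since the claim is entrywise in $\bfQ$, it suffices to treat a single scalar variable $q = q_0 + q_1\bmi + q_2\bmj + q_3\bmk \in \bbH$ and a function $f:\bbH\to\bbR$; the matrix statement then follows by applying the scalar argument to each of the $MN$ entries independently.

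First I would identify $\bbH$ with $\bbR^4$ through $q \mapsto (q_0, q_1, q_2, q_3)$, so that $f$ becomes a genuine real-valued function of four real variables. By definition, a point is a \emph{stationary point} of $f$ in the classical sense precisely when all four real partial derivatives $\partial f/\partial q_0,\ \partial f/\partial q_1,\ \partial f/\partial q_2,\ \partial f/\partial q_3$ vanish there, i.e. when the real gradient $\nabla_{\mathrm{real}} f$ is zero. This is the common target to which I will tie both quaternion conditions.

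Next I would invoke the definition of the quaternion gradient from the generalized $\bbH\bbR$-calculus. Each component of $\nabla_{\bfQ}f$ (resp. $\nabla_{\overline{\bfQ}}f$) is, by construction, a fixed quaternion-valued $\bbR$-linear combination of the four real partials, of the schematic form $\tfrac14\big(\partial_{q_0}f \mp \partial_{q_1}f\,\bmi \mp \partial_{q_2}f\,\bmj \mp \partial_{q_3}f\,\bmk\big)$, with the sign pattern fixed by the chosen convention. The crucial observation is that the real-linear map $\bbR^4 \to \bbH$ sending the $4$-tuple of real partials to this quaternion is a bijection: because $\{1, \bmi, \bmj, \bmk\}$ is an $\bbR$-basis of $\bbH$, the associated $4\times4$ real coefficient matrix (a scaled signed diagonal, hence nonsingular) is invertible. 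The same holds for the map defining $\nabla_{\overline{\bfQ}}f$. I would then conclude by injectivity: $\nabla_{\bfQ}f(\bfQ_0)=0$ holds if and only if the entire $4$-tuple of real partials vanishes at $\bfQ_0$, which is exactly classical stationarity, and likewise $\nabla_{\overline{\bfQ}}f(\bfQ_0)=0$ holds if and only if that same $4$-tuple vanishes. Chaining the two equivalences through the common condition ``$\nabla_{\mathrm{real}} f(\bfQ_0)=0$'' yields the stated chain $\bfQ_0 \text{ stationary} \Leftrightarrow \nabla_{\bfQ}f(\bfQ_0)=0 \Leftrightarrow \nabla_{\overline{\bfQ}}f(\bfQ_0)=0$.

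The main obstacle is bookkeeping rather than conceptual: one must pin down the precise definition of the quaternion (conjugate) gradient in the generalized $\bbH\bbR$-calculus --- in particular the role of the involutions $q^{\bmi}=-\bmi q\bmi$, $q^{\bmj}$, $q^{\bmk}$ and the accompanying product and chain rules --- and verify that the resulting coefficient matrix is genuinely nonsingular despite quaternion non-commutativity. Care is also needed to confirm that the entrywise reduction for matrices is legitimate, i.e. that the differential operator couples the entries of $\bfQ$ only through $f$ itself and not through its own definition, so that the scalar bijection argument transfers verbatim to each entry.
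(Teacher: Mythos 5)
Your argument is correct, but note that the paper itself offers no proof of this proposition: it is imported verbatim as Theorem~4.1 of the cited reference \cite{xu2015theory}, so there is no in-paper proof to compare against. Your reduction is exactly the standard one used there: for real-valued $f$ the four real partials $\partial f/\partial q_a$ are themselves real, so each entry of $\nabla_{\bfQ}f$ (resp.\ $\nabla_{\overline{\bfQ}}f$) is $\tfrac14$ times a signed recombination of them along the basis $\lbrace 1,\bmi,\bmj,\bmk\rbrace$, the associated $4\times 4$ real matrix is a nonsingular signed diagonal, and vanishing of either quaternion gradient is therefore equivalent to vanishing of the real gradient; the entrywise reduction is legitimate because the $\bbH\bbR$ operators act on each entry of $\bfQ$ separately. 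One small simplification you could add: for real-valued $f$ one has $\nabla_{\overline{\bfQ}}f = \overline{\nabla_{\bfQ}f}$ entrywise, which gives the second equivalence immediately without passing through the real partials a second time. The only point deserving care, which you correctly flag, is that the clean ``scaled signed diagonal'' structure relies on $f$ being real-valued; for quaternion-valued $f$ the generalized $\bbH\bbR$ derivatives involve the involutions $q^{\bmi},q^{\bmj},q^{\bmk}$ and noncommuting coefficients, and the equivalence would have to be restated.
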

For practical computations,  Table \ref{table:quaternionFunctionGradient} reprints from \cite{xu2015theory} some useful functions of the quaternion matrix variable $\bfQ$ and their corresponding gradients.

\subsection{Updates for $\bfH$}
\label{app:sub:updateH}
Due to non-commutativity of the quaternion product, standard rules matrix derivatives cannot be applied directly.
Instead, let us write explicitly the Euclidean cost function
\begin{align}
  \Vert \bfX -  \bfW\bfH\Vert^2_F &= \sum_{m,n} \left\vert x_{mn} - \sum_k w_{mk}h_{kn}\right\vert^2\\
  & \hspace{-7em}= \sum_{m,n}\left( x_{mn} - \sum_k w_{mk}h_{kn}\right)\left( \overline{x_{mn}} \sum_k \overline{w_{mk}} h_{kn}\right)\:,\label{eq:euclideanCostExplicit}
\end{align}
where we have used that $\overline{w_{mk} h_{kn}} = \overline{w_{mk}} h_{kn}$ since $h_{kn}$ is real-valued.
Using the chain rule, the partial derivative of \eqref{eq:euclideanCostExplicit} w.r.t. $h_{ij}$ is
\begin{align}
  \frac{\partial \Vert \bfX -  \bfW\bfH\Vert^2_F}{\partial h_{ij}} & \notag\\
  &\hspace{-6em}= - \sum_{m,n}\frac{\partial \sum_k w_{mk}h_{kn} }{\partial h_{ij}} \left( \overline{x_{mn}}- \sum_k \overline{w_{mk}} h_{kn}\right) \notag\\
  &\hspace{-6em}\phantom{=} - \sum_{m,n} \left( w_{mn} - \sum_k w_{mk}h_{kn}\right)\frac{\partial \sum_k \overline{w_{mk}} h_{kn} }{\partial h_{ij}}\\
  &\hspace{-6em}= -\sum_m w_{mi} \left( \overline{x_{mj}} - \sum_k \overline{w_{mk}}h_{kj}\right)\notag\\
  &\hspace{-6em}\phantom{=} - \sum_m \left( x_{mj} - \sum_k w_{mk}h_{kj}\right)\overline{w_{mi}}\\
  &\hspace{-6em}= -\sum_m w_{mi} \left( \overline{x_{mj}} - \overline{(\bfW\bfH)_{mj}}\right)\notag \\
  &\hspace{-6em}\phantom{=}- \sum_m \left(x_{mj} - (\bfW\bfH)_{mj}\right)\overline{w_{mi}}\\
  &\hspace{-6em}= -2 \real\left[ \sum_m w_{mi} \left(\overline{x_{mj}} -\overline{(\bfW\bfH)_{mj}}\right)\right]\\
  &\hspace{-6em}= -2\real\left[ \bfW^\top\overline{\left(\bfX - \bfW\bfH\right)}\right]_{ij}\:.
\end{align}
As a result, one gets
\begin{equation}\label{eq:gradientHexplicit}
  \nabla_{\bfH} \Vert \bfX - \bfW\bfH\Vert^2_F = -2\real\left[ \bfW^\top\overline{\left(\bfX - \bfW\bfH\right)}\right]\:.
\end{equation}
The Euclidean cost function is convex in $\bfH$; therefore, the minimizer ${\bfH}_\star$ of this cost is obtained by cancelling out the gradient \eqref{eq:gradientHexplicit}:
\begin{equation}
  -2\real\left[ \bfW^\top\overline{\bfX}\right]  + 2\real\left[ \bfW^\top\overline{\bfW}{\bfH}_\star\right] = 0\:,
\end{equation}
where again we used that $\overline{\bfW{\bfH}_\star} = \overline{\bfW}{\bfH}_\star$ since ${\bfH}_\star$ is a real matrix.
It also implies that $\real\left[ \bfW^\top\overline{\bfW}{\bfH}_\star\right] = \real\left[ \bfW^\top\overline{\bfW}\right]{\bfH}_\star$, leading to the following expression for ${\bfH}_\star$:
\begin{align}
  {\bfH}_\star &= \argmin_{\bfH} \Vert \bfX - \bfW\bfH\Vert^2_F\\
  &= \left(\real\left[ \bfW^\top\overline{\bfW}\right]\right)^{-1}\real\left[ \bfW^\top\overline{\bfX}\right]\:.
\end{align}
\begin{table}
  \centering
  \caption{Matrix functions $f: \bbH^{M\times N} \rightarrow \bbH$ and corresponding gradients. Matrices $\bfA_1, \bfA_2$ are quaternion-valued with appropriate dimensions.}
  \begin{tabular}{cc}
    \toprule
    $f(\bfQ)$ & $\nabla_{\overline{\bfQ}} f$\\
    \midrule
    $\trace \bfA_1 \bfQ \bfA_2$ & $- \frac{1}{2}\bfA_1^\transp \bfA_2^\dagger$ \\
    $\trace \bfA_1 \bfQ^\dagger \bfA_2$ & $\real[\bfA_2] \bfA_1$\\
    $\trace \bfA_1 \bfQ^\dagger\bfQ \bfA_2$& $\real[\bfQ\bfA_2]\bfA_1 -\frac{1}{2}(\bfA_1\bfQ^\dagger)^\transp\bfA_2^\dagger$\\
    \bottomrule
  \end{tabular}
 \label{table:quaternionFunctionGradient}
\end{table}
\subsection{Updates for $\bfW$}
\label{app:sub:updateW}
Let us rewrite the Euclidean distance \eqref{eq:EuclideanCost} in terms of a trace operator
\begin{equation}\label{eq:EuclideanDistTrace}
  \Vert \bfX - \bfW\bfH\Vert^2_F  = \trace \left[\left(\bfX - \bfW\bfH\right)^\dagger\left(\bfX - \bfW\bfH\right)\right],
\end{equation}
where $\bfQ^\dagger = \overline{\bfQ}^\top$ is the (quaternion) conjugate-transpose of $\bfQ$.
Developping\footnote{We use that, for any quaternion matrices $\bfA$ and $\bfB$, $(\bfA\bfB)^\dagger = \bfB^\dagger\bfA^\dagger$. See e.g. \cite{zhang1997quaternions} for more complete properties of quaternion matrices.} \eqref{eq:EuclideanDistTrace} one gets
\begin{equation}\label{eq:EuclideanDistTraceDevlopped}
  \begin{split}
        \Vert \bfX - \bfW\bfH\Vert^2_F &= \trace \bfX^\dagger \bfX - \trace \bfX^\dagger\bfW\bfH \\
        &- \trace \bfH^\dagger\bfW^\dagger \bfX + \trace \bfH^\dagger\bfW^\dagger \bfW\bfH\:.
  \end{split}
\end{equation}
By Proposition \ref{prop:stationaryPoint}, stationary points are defined by
cancelling values of the gradient of \eqref{eq:EuclideanDistTraceDevlopped} w.r.t. to $\overline{\bfW}$.
Thanks to Table \ref{table:quaternionFunctionGradient} one gets
\begin{align}
  \nabla_{\overline{\bfW}} \Vert \bfX -\bfW\bfH\Vert^2_F & = \frac{1}{2}\left(\bfX^\dagger\right)^\transp\bfH^\transp - \real[\bfX]\bfH^\transp   \notag\\
  &\hspace{-5em}+ \real[\bfW\bfH]\bfH^\transp -\frac{1}{2}\left(\bfH^\transp\bfW^\dagger\right)^\transp\bfH^\transp\\
  &\hspace{-5em}= -\left(\real[\bfX] - \frac{1}{2}\overline{\bfX}+ \frac{1}{2}\overline{\bfW\bfH} - \real[\bfW\bfH]\right)\bfH^\transp \\
  &\hspace{-5em}=-\frac{1}{2}\left(\bfX - \bfW\bfH\right)\bfH^\transp\:.\label{eq:gradientWexplicit}
\end{align}
The Euclidean cost function is convex in $\bfW$.
Thus, cancelling out the gradient \eqref{eq:gradientWexplicit} yields the global minimizer ${\bfW}_\star$ of this cost such that
\begin{equation}
  -\bfX\bfH^\transp  + {\bfW_\star}\bfH\bfH^\transp = 0
\end{equation}
so that
\begin{align}
  {\bfW}_\star &= \argmin_{\bfW} \Vert \bfX - \bfW\bfH\Vert^2_F  =  \bfX\bfH^\transp\left(\bfH\bfH^\transp\right)^{-1}\:.
\end{align}

\bibliographystyle{IEEEtran}
\bibliography{refs}

\end{document}